\colorlet{scolor}{black}
\colorlet{hscolor}{DarkSlateGrey}
\theoremstyle{plain}
\newtheorem{theorem}{Theorem}[section]
\newtheorem{lemma}[theorem]{Lemma}
\theoremstyle{definition}
\newtheorem{assumption}{Assumption}
\theoremstyle{remark}
\newtheorem{remark}{Remark}
\begin{document}
\let\WriteBookmarks\relax
\def\floatpagepagefraction{1}
\def\textpagefraction{.001}
\shorttitle{Leveraging social media news}
\shortauthors{CV Radhakrishnan et~al.}

\title [mode = title]{Credit Scoring by Incorporating Dynamic Networked Information}
\tnotemark[1]
\tnotetext[1]{This research is supported by the Verg Foundation and China Scholarship Council.}

\author[1]{Yibei Li}[orcid=0000-0001-7287-1495]
\ead{yibei@kth.se}


\address[1]{Department of Mathematics, KTH Royal Institute of Technology, 10044, Stockholm, Sweden}

\author[1]{Ximei Wang}[orcid=0000-0002-4818-2910]
\cormark[1]
\ead{ximei@kth.se}

\author[1]{Boualem Djehiche}[orcid=0000-0002-6608-0715]
\ead{boualem@kth.se}



\author[1]{Xiaoming Hu}[orcid=0000-0003-0177-1993]
\ead{hu@kth.se}


\cortext[cor1]{Corresponding author}


\begin{abstract}
	In this paper,  the credit scoring problem is studied by incorporating networked information, where the advantages of such incorporation are investigated theoretically in two scenarios. Firstly, a Bayesian optimal filter is proposed to provide risk prediction for lenders assuming that published credit scores are estimated merely from structured financial data. Such prediction can then be used as a monitoring indicator for the risk management in lenders' future decisions. Secondly, a recursive Bayes estimator is further proposed to improve the precision of credit scoring by incorporating the dynamic interaction topology of clients. It is shown that under the proposed evolution framework, the designed estimator has a higher precision than any efficient estimator, and the mean square errors are strictly smaller than the Cram\'er--Rao lower bound for clients within a certain range of scores. Finally, simulation results for a special case illustrate the feasibility and effectiveness  of the proposed algorithms.
\end{abstract}

%

\begin{keywords}
decision processes \sep multi-agent systems \sep credit scoring 
\sep Bayesian inference \sep networked information 
\end{keywords}

\maketitle
\thispagestyle{plain}

\section{Introduction}

Evaluating or estimating the credit scores according to clients' financial background  is important for lenders such as banks or other lending institutions \citep{leong2016credit, thomas2002credit}.
The credits of clients not only decide whether they can get loans or not but also influence the cost of the loans such as the lending duration and lending rates \citep{verbraken2014development,emekter2015evaluating}.
Most of the current credit scoring methods proposed in the literature are based on statistical approaches. Different kinds of models have been widely employed in order to estimate the probability of default (PD) for clients such as individuals, companies or sovereign states \citep{akkocc2012empirical,fernandes2016spatial}. 
For instance, the logistic regression model \citep{ bahnsen2014example, khemais2016credit}, ordered probit model \citep{reusens2017sovereign,tugba2019determinants}, artificial neural network (ANN) algorithm \citep{doori2014credit, byanjankar2015predicting}, support vector machines (SVM) method \citep{martens2007comprehensible, danenas2015selection}, fuzzy classification model \citep{sohn2016technology, ignatius2018fuzzy}, and set-valued identification model \citep{XimeiWangHU} are broadly used for predicting the PD for clients. \cite{lessmann2015benchmarking} compares 41 classifiers in terms of six performance measures across eight real-world credit scoring data sets and provides a holistic picture of the state-of-the-art in predictive modeling for retail scorecard development.  However,
there does not always exist the ``best'' model in credit scoring problem \citep{louzada2016classification, imtiaz2017better}. In fact, according to a large number of studies for modeling credit scoring 
there is no consistent conclusion about which method is the most accurate
\citep{migueis2013enhanced, zhang2017up}. 


The statistical approach considers information of multidimensional attributes about the client.  Usually, around 10 to 30 attributes are used as inputs in the credit scoring models. 
Such attributes typically consists of the key financial measures of clients such as their income or debt level, credit history, and payment frequency.
Moreover, different institutions may use different factors as input for the estimation of credit scores.
For some institutions in UK, credit scores of the clients are predicted by factors of payment history, age of accounts, and credit utilization. Particularly, in the UK and New Zealand for example, credit applications can be declined or delayed where there is no electoral roll listing available.
Moreover, Japan credit rating agency (JCR) evaluates credit for consumer based on factors like length of employment and salary \citep{mohammadi2016customer}. 

However, in the past few years, the power of data, algorithms and technologies have led to a dramatic change in credit scoring \citep{ntwiga2016consumer, campbell2017big}.
Networked information has attracted increasing attention since it can provide additional information for financial decision making. 
Furthermore, it is believed that  networked information about clients can be used as an effective way for improving credit ratings for clients \citep{herrero2009social, de2014network}. 
More and more studies rely on network-based data to evaluate the creditworthiness of clients.
For example, social or financial networked information helps to picture the clients in details \citep{Rusli, freedman2017information}, and network profiles of the clients such as the employment history, number of friends, or financial transfer activities may influence or determine the credit scores for each individual \citep{bolhuis2015estimating, XimeiWang}. However, there are several issues in the existing literature. Firstly, most of the studies are based on data-driven decision making algorithms \citep{Masyutin}, where various statistical algorithms are used to model the relationship between the information about the clients and the lenders' decisions from a large dataset.
For example, \cite{freedman2017information} use the data from Prosper.com, which is the largest peer-to-peer consumer lending platform to examine whether social networks facilitate online markets lending business. \cite{de2015cares}
use data from Lenddo, where they get social networked data from medias such as Facebook, Twitter, and LinkedIn to provide unique insights about clients' creditworthiness. However, it is difficult to conduct theoretical analysis on such empirical results since they lack tractable mathematical models. 
Furthermore, the existing theoretical work about networked information mainly relies on static analysis \citep{wozabal2012coupled, Wei}, which cannot deal with the scenario where individual creditworthiness and network connections evolve dynamically. Thus new paradigms are needed to bridge the gap between the existing financial practices and quantitative analysis for a changing market. 

Motivated by growing empirical studies with correlated networked data in credit scoring, the aim of this paper is to show the feasibility of such practice from a theoretical point of view. In contrast to the existing statistical algorithms which are mostly data-driven, we use a model-based framework to analyze the impact of incorporating networked information on credit scoring  quantitatively. The correlation of networked data to individual creditworthiness is established by assuming that interaction among the clients takes place dynamically according to the principle of ``homogeneous preference'' based on others' credit assessments reported by the credit scoring agency (lenders), namely, people are more likely to form financial ties with others having similar creditworthiness \citep{Wei}. Under such assumption of homophily, we will show that this correlation between network topologies and individual creditworthiness can be exploited for further improving the precision of lenders' online decision making by incorporating networked information in a dynamic framework.

Two scenarios are considered in this paper. Firstly, when the published credit scores are merely based on individual attributes, an optimal Bayesian filter is designed to predict the creditworthiness for each client based on historical observations of network connections. Such a prediction could serve as a key monitoring indicator for risk management purposes for lenders on future financial decisions.
Moreover, a recursive Bayes estimator is proposed to further improve the precision of credit scoring by incorporating the dynamic network topology. A one-step optimal estimator is given by minimizing the average risk at each time period. It is shown that under the proposed evolution framework, the designed estimator has less uncertainty than any other efficient estimators, and the mean square errors are strictly smaller than the Cram\'er--Rao lower bound for clients having scores within a certain range. Finally, a special case is studied where the true credits are assumed to be uniformly distributed. Simulation results illustrate the feasibility and effectiveness of the proposed algorithms.

The rest of this paper is organized as follows.
In Section \ref{section:Modelling}, we give an overview of the mathematical modeling for the credit scoring problem with networked information. In Section \ref{Recursive}, a Markov model is used for prediction and the corresponding recursive Bayesian filter is derived to estimate individual credit scores based on historical observations. In Section \ref{dynamic} we propose an online scoring framework to improve the scoring precision recursively.  In Section \ref{section:Numerical Simulations}, a simulation study is given to show the feasibility and efficiency of the proposed algorithm. Finally, some discussions and conclusions are given in Section \ref{section:Conclusions and Future Work}.

\section{Mathematical model of credit scoring}\label{section:Modelling}

\subsection{Network modeling}
Throughout recent decades, various kinds of networks have been regarded as informative tools to accumulate huge amounts of data and provide insights on people's behaviors. Different kinds of networks, such as social media data (blogs, Facebook and LinkedIn), trading networks and mobile phone data, have been used by the financial sector to make robust and informed decisions in the field of risk management.
There is consistent support \citep{Wei,haenlein2011social,goel2013predicting} that social or financial ties are more likely to form between clients who tend to show similarities in the revenue level or consumption behaviors. Therefore, such similarity is studied in this paper to facilitate credit scoring. 

In this paper, we consider the credit scoring problem for $N$ clients.
Let $x_i \in \mathbf{R}$ denote the true credit score of client $i~(i=1,\ldots,N)$. Without loss of generality, we assume that $x_i$ evolves according to the following linear model as a weighted sum of history creditworthiness and current assets change \footnote{A discrete model is used since in practice the credit scores are evaluated at discrete time steps. Each time step stands for a period with a specific number of days.},
\begin{equation}\label{e_model}
x_i \left(t \right) = a(t-1) x_i \left(t-1 \right) + b(t-1) u_i \left(t-1 \right) +w_i \left(t-1 \right),
\end{equation}
where at each time $t \geq 0$, $u_i(t)$ denotes the change of individual attributes for client $i$, $a(t) \in (0, 1]$ and $b(t) \in \mathbf{R}_+$ are given weights for the history assessment and updated attributes respectively, and $w_i \left(t \right) \sim \mathcal{N}\left( 0, Q_t \right) $ represents the uncertainty.

We model the financial network between clients by a time-varying graph denoted by $\mathcal{G}_t = (\mathcal{V}, \mathcal{E}_t)$, where the vertex set  $ \mathcal{V} = \lbrace 1,2,\ldots,N \rbrace $ denotes the clients in the network and $ \mathcal{E}_t \subset \mathcal{V} \times \mathcal{V}$ is the edge set at time $t$.

 At each time, client $i$ establishes financial connections based on its own credit $x_i$ and others' credit assessment $y_j$ reported by the lender or the credit scoring agency. We assume that the network is formed based on ``homogeneous preference'', i.e., clients prefer to interact people with similar credit levels. Each pair of clients meet with a probability $\nu >0$. Client $i$ forms a connection with client $j$ if and only if they have met and
\begin{equation} \label{eq:connection}
m > |x_i - y_j|,
\end{equation}
where $m$ is a random variable that models the match threshold. These two events are assumed independent.

We say that at time $t$ client $j$ is a neighbor of client $i$ if they are connected, which is denoted by $(i,j) \in \mathcal{E}_t$. The set of neighbors of client $i$ is denoted by $\mathcal{N}_{i}(t) = \lbrace j: (i,j) \in \mathcal{E}_t \rbrace$. We use $n_{i}(t)$ to denote the number of neighbors of client $i$. In the remaining part of the paper, $\lbrace g_{ij}(t) \rbrace_{i,j=1}^N$ is used to denote the elements of $\mathcal{E}_t$, where
\begin{equation*}
g_{ij}(t) = \begin{cases}
1 & \text{if}~(i,j) \in \mathcal{E}_t, \\
0 & \text{otherwise}.
\end{cases}
\end{equation*}

Hence client $i$ forms a connection with client $j$ with probability
\begin{equation}
\begin{aligned}
\text{Pr}(g_{ij}(t) = 1) &= \text{Pr}(\{i~\text{and}~j~\text{meet}\}\cap ~\{ m > |x_i - y_j\}) \\
&= \text{Pr}(m > |x_i - y_j|)\text{Pr}(i~\text{and}~j~\text{meet}),
\end{aligned}
\end{equation}
since the two events are assumed to be independent.

We determine the distribution of $m$ according to the following criteria:
\begin{enumerate}
\item[(i)] $m$ is a nonnegative random variable;
\item[(ii)] Clients are connected based on homophily preference, which means that $\text{Pr}(m > |x_i - y_j|)$ is larger with smaller credit difference $|x_i - y_j|$.
\end{enumerate}

Based on the above criteria, we choose $m$ to be Rayleigh distributed with parameter $k$ for which $P(m>x)=e^{-\frac{x^2}{2k^2}},\,\, x\ge 0$. Later in this paper we will see that such a choice  not only satisfies the aforementioned  criteria (i) and (ii), but also makes it possible to derive some useful analytic expressions in Bayesian inference. Without loss of generality, we normalize the model by taking $k=1$ and scale the credit scores to a bounded interval, i.e., $x_i(t), y_i(t) \in [0,M], \,\, M>0$, for any $t \geq 0$. Then it holds that
\begin{equation}\label{e_PrBer}
\text{Pr}(g_{ij}(t) = 1) = \nu e^{-\frac{(x_i(t) - y_j(t))^2}{2}}.
\end{equation}

Note that in most of the existing methodologies, $y_j$ is derived merely based on individual structured financial data. In the paper, we will show that the networked information can also be incorporated to achieve a higher estimation precision.

Different from \cite{Wei}, here we assume that at each time the lender only uses a partial observation of the network.
For each client $i$, the lender has an observation for its financial network by $g_{i}(t) = \lbrace g_{ij}(t) \rbrace_{j \in \mathcal{N}_{i}(t)}$.
Note that it is more reasonable to use information only from neighbors, since it is more computationally efficient.
\subsection{Problem formulation }
Up until recently, the lender assesses a client's creditworthiness  based only on its own financial history and individual attributes such as credit utilization ratio. As mentioned above, the economic engagement between clients is closely related to the homogeneity of their credits. Therefore, such a connection between the financial network and individual credits can be used to improve the scoring precision. In this paper, the following two scenarios are considered.

\begin{enumerate}
	\item \textbf{Risk prediction}
	
	The true credits of clients evolve according to the dynamics (\ref{e_model}). At each period, the lender publishes an estimated credit score for each client based only on its own information. Meanwhile, the dynamic process is observed by a risk evaluator, whose task is to provide a more precise prediction about the future creditworthiness based on historical observations of published scores and financial networks. Such prediction can then serve as a suggestion for the lender on future financial decisions.
	
	\item  \textbf{Recursive precision improvement}
	
	As a further step, the credit estimation process and the evolution of the network are considered in an integrated manner. At each period, the lender publishes an optimal estimate for each client based on the current financial network as well as individual attributes, which is then used to form a new network in the next period. The credit score estimation is coupled with the financial network in the sense that it is not only influenced by the current network, but also determines the formation of new connections in the next period.
	It is shown that the estimation precision can be improved recursively through the dynamic interaction between the lender and clients.
	
\end{enumerate}

\section{Optimal Bayesian filtering}\label{Recursive}
The risk prediction problem is studied in this section.
The credit scoring process is modeled as a Markov process, and the corresponding recursive Bayesian filter is derived to predict future creditworthiness based on historical observations of individual scores and network connections.

We assume that the lender can only get a noisy observation $y_i(t)$  of client  $i$'s credit score based on the individual information $x_i(t)$ at time $t$,
\begin{equation*}
y_i \left(t \right) = x_i \left(t \right) + v_i \left(t \right),
\end{equation*}
where $v_i \left(t \right) \sim \mathcal{N}\left( 0, R_t \right)$ denotes the observation noise.

The filtering model can be given by the following Markov process
\begin{equation}\label{e_markov}
\begin{cases}
x_i \left(t \right) = a(t-1) x_i \left(t-1 \right) + b(t-1) u_i \left(t-1 \right) +w_i \left(t-1 \right),\\
y_i \left(t \right) = x_i \left(t \right) + v_i \left(t \right), \\
g_{ij}(t) \sim Ber \left(e^{-\frac{(x_i(t)-y_j(t))^2}{2}} \right), \quad j \in \mathcal{N}_{i}(t),
\end{cases}
\end{equation}
where $w_i \left(t \right) \sim \mathcal{N}\left( 0, Q_t \right) $ and $v_i \left(t \right) \sim \mathcal{N}\left( 0, R_t \right)$ are independent Gaussian process with $R_t >0$ and $Q_t \geq 0$, respectively. We denote $ z_i(t) = [y_i(t), g_{ij}(t){\vert}_{j \in \mathcal{N}_{i}(t)}] \in \mathbf{R}^{n_i(t)+1}$ as the observation of client $i$ by the risk evaluator at the time $t$. Let $Z_{i,t} := [z_i(0),\ldots, z_i(t)]$ denote the sequence of observation history.

\begin{theorem}\label{thm_filter}
	Assume $x_i(0) \sim \mathcal{N} (\bar{x}_0, {P}_{i,0})$, and
	\begin{equation*}
	\begin{aligned}
	&\mathbb{E}[w(t)w(t+\tau)]=0, \mathbb{E}[v(t)w(t+\tau)]=0, \quad \text{for any} \quad \tau \ne 0, \\
	&\mathbb{E}[w(t){x_0}]=0, \mathbb{E}[v(t){x_0}]=0, \quad \text{for any} \quad t \geq 0.
	\end{aligned}
	\end{equation*}
	
	Then ${x_i}\left( {t} \right) | {Z_{i,t-1}} $ and ${x_i}\left( {t} \right) | {Z_{i,t}} $ are Gaussian with
	\begin{align}
	& {x_i}\left( {t } \right) | {{Z_{i,t-1}}} \sim \mathcal{N}(\hat x_i (t|t-1), P_i (t|t-1)) \label{e_filter1}\\
	& {x_i}\left( {t } \right) | {{Z_{i,t}}} \sim \mathcal{N}(\hat x_i (t|t), P_i (t|t)) \label{e_filter2}.
	\end{align}
	where
	\begin{equation}\label{e_iter}
	\begin{aligned}
	& \hat x_i (t|t-1) = a(t-1) \hat x_i (t-1|t-1) + b(t-1) u_i(t-1), \\
	& P_i (t|t-1) = a(t-1)^2 P_i (t-1|t-1) + Q_{t-1}, \\
	& \hat x_i (t|t) = \hat x_i (t|t-1) + {K_{i,t}}(y_i(t)-\hat x_i (t|t-1)) + {H_{i,t}} \sum\limits_{j \in {\mathcal{N}_{i}(t)}} (y_j(t) - \hat x_i (t|t-1)),  \\
	& P_i (t|t) = (1 - K_{i,t} - {n_{i}(t)}{H_{i,t}})P_i (t|t-1),
	\end{aligned}
	\end{equation}
	$K_{i,t}$ and ${H_{i,t}}$ are given by
	\begin{equation*}
	\begin{aligned}
	& K_{i,t} = {P_i (t|t-1)}/(R_t + P_i (t|t-1) + {n_{i}(t)}{R_t}{P_i (t|t-1)}), \\
	& H_{i,t} = {P_i (t|t-1)}{R_t}/(R_t + P_i (t|t-1) + {n_{i}(t)}{R_t}{P_i (t|t-1)}).
	\end{aligned}
	\end{equation*}
\end{theorem}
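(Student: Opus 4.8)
The plan is to establish the two Gaussian statements by induction on $t$, splitting each step into a \emph{prediction} (time update) and a \emph{correction} (measurement update), exactly as in the Kalman paradigm. The base case is the assumed prior $x_i(0)\sim\mathcal{N}(\bar x_0,P_{i,0})$. The decisive observation---and the reason the filter stays Gaussian despite the Bernoulli network measurements---is that, because neighbours are by definition the clients for which $g_{ij}(t)=1$, and because the connection probability was chosen to be $e^{-(x_i-y_j)^2/2}$, the network likelihood is a \emph{Gaussian} function of the unknown state $x_i(t)$.

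For the prediction step I would assume inductively that $x_i(t-1)\mid Z_{i,t-1}\sim\mathcal{N}(\hat x_i(t-1|t-1),P_i(t-1|t-1))$. Since $x_i(t)=a(t-1)x_i(t-1)+b(t-1)u_i(t-1)+w_i(t-1)$ is an affine function of $x_i(t-1)$ plus the independent Gaussian noise $w_i(t-1)\sim\mathcal{N}(0,Q_{t-1})$, and the stated independence assumptions guarantee $w_i(t-1)$ is uncorrelated with the past observations, the conditional law of $x_i(t)\mid Z_{i,t-1}$ is Gaussian with mean $a(t-1)\hat x_i(t-1|t-1)+b(t-1)u_i(t-1)$ and variance $a(t-1)^2P_i(t-1|t-1)+Q_{t-1}$, which is \eqref{e_filter1} together with the first two lines of \eqref{e_iter}.

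For the correction step I would apply Bayes' rule, $p(x_i(t)\mid Z_{i,t})\propto p(z_i(t)\mid x_i(t))\,p(x_i(t)\mid Z_{i,t-1})$, and factor the likelihood using conditional independence of the measurements given the state (treating the neighbours' published scores $y_j(t)$ as known constants):
\begin{equation*}
p(z_i(t)\mid x_i(t)) \propto \exp\!\Big(-\tfrac{(y_i(t)-x_i(t))^2}{2R_t}\Big)\prod_{j\in\mathcal{N}_i(t)}\exp\!\Big(-\tfrac{(x_i(t)-y_j(t))^2}{2}\Big).
\end{equation*}
Multiplying by the Gaussian predictive density $\mathcal{N}(\hat x_i(t|t-1),P_i(t|t-1))$ and completing the square in $x_i(t)$ shows the posterior is Gaussian with precision equal to the sum of the three precisions,
\begin{equation*}
\frac{1}{P_i(t|t)} = \frac{1}{P_i(t|t-1)} + \frac{1}{R_t} + n_i(t),
\end{equation*}
and mean equal to the corresponding precision-weighted combination of $\hat x_i(t|t-1)$, $y_i(t)$, and $\sum_{j\in\mathcal{N}_i(t)}y_j(t)$. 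Here each of the $n_i(t)$ neighbours acts as a unit-variance pseudo-measurement of $x_i(t)$, which is precisely where the Rayleigh choice pays off.

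The remaining work is purely algebraic: I would rearrange the precision-weighted mean into the innovation form in \eqref{e_iter} and read off $K_{i,t}$ and $H_{i,t}$, then verify $1-K_{i,t}-n_i(t)H_{i,t}=R_t/(R_t+P_i(t|t-1)+n_i(t)R_tP_i(t|t-1))$ so that the expression $P_i(t|t)=(1-K_{i,t}-n_i(t)H_{i,t})P_i(t|t-1)$ coincides with the inverse-precision identity above. I do not expect genuine difficulty in that rearrangement; the point needing care is the modelling reduction itself---that the network factor may be taken as $\prod_{j}e^{-(x_i-y_j)^2/2}$ with the $y_j(t)$ held fixed and the normalising constant $\nu$ irrelevant to the inference, and that the noises' independence legitimises the factorisation of the likelihood. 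This conceptual step, turning the Bernoulli observations into a Gaussian likelihood, is the crux of the argument; everything else follows from standard Gaussian conjugacy.
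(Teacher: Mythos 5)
Your proposal is correct and follows essentially the same route as the paper's proof: induction on $t$ with a Gaussian prediction step, then a Bayes-rule correction in which the Bernoulli edge observations contribute the Gaussian-shaped factor $\prod_{j\in\mathcal{N}_i(t)}e^{-(x_i-y_j)^2/2}$, so the posterior stays Gaussian and the gains follow by matching coefficients. If anything, your write-up is more explicit than the paper's, which asserts the Gaussianity and the identification with \eqref{e_iter} without displaying the precision-sum identity $1/P_i(t|t)=1/P_i(t|t-1)+1/R_t+n_i(t)$ that you verify.
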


\begin{proof}
	Due to the Markov property of process (\ref{e_markov}), by Bayesian rule it holds that
	\begin{equation}\label{e_bayesrule}
	p(x_i(t+1)|Z_{i,t+1}) = \frac{p(z_i(t+1)|x_i(t+1))p(x_i(t+1)|Z_{i,t})}{p(z_i(t+1)|Z_{i,t})}.
	\end{equation}
Since $w(t)$ and $v(t)$ are Gaussian, we have
	\begin{subequations}
		\begin{align}
		& p(y_i(t)|x_i(t)) = p_{v_i}(y_i(t)-x_i(t)) = \mathcal{N} (x_i(t), R_t) \label{e_pv}\\
		& p(x_i(t+1)|x_i(t),Z_{i,t}) = p_{x_i}(x_i(t+1)-a(t)x_i(t)-b(t)u_i(t)) \notag \\
		& \hspace{101pt} = \mathcal{N} (a(t)x_i(t)+b(t)u_i(t), Q_t). \label{e_pw}
		\end{align}
	\end{subequations}
	
The Bernoulli distribution of $g_i(t)$ is given by
	\begin{equation}\label{e_GonXY}
	p(g_i(t)|x_i(t), y_i(t)) = \prod\limits_{j \in {\mathcal{N}_{i}(t)}} {{\nu e^{ - {{{{\left( {{x_i}(t) - {y_j}(t)} \right)}^2}} \over 2}}}}.
	\end{equation}
	
	Firstly, we prove (\ref{e_filter2}) by induction. To begin with, we notice that
	\begin{equation*}
	{x_i}\left( 0 \right) | Z_{i,0}  \sim \mathcal{N} (\hat x_i (0|0), P_i (0|0)),
	\end{equation*}
	with $\hat x_i (0|0) = \bar{x}_0$ and $P_i (0|0)={P}_{i,0}$.
	
	Assume that at time $t$, it holds that
	\begin{equation} \label{e_t}
	{x_i}\left( {t } \right) | {{Z_{i,t}}} \sim \mathcal{N}(\hat x_i (t|t), P_i (t|t))
	\end{equation}
	for some $\hat x_i (t|t)$ and $P_i (t|t)$.
	
	Next, we will show that the above equation also holds at time $t+1$. Notice that $p(x_i(t+1)|Z_{i,t})$ can be computed by
	\begin{equation*}
	p(x_i(t+1)|Z_{i,t}) = \int p(x_i(t+1)|x_i(t),Z_{i,t})p(x_i(t)|Z_{i,t}) d{x_i(t)}.
	\end{equation*}
	
	Then by (\ref{e_pw}) and (\ref{e_t}), we obtain
	\begin{equation} \label{e_pre}
	p(x_i(t+1)|Z_{i,t}) = \mathcal{N} (\hat x_i (t+1|t), P_i (t+1|t)),
	\end{equation}
	where
	\begin{equation*}
	\begin{aligned}
	& \hat x_i (t+1|t) = a(t) \hat x_i (t|t) + b(t) u_i(t), \\
	& P_i (t+1|t) = a(t)^2 P_i (t|t) + Q_t.
	\end{aligned}
	\end{equation*}
	
	In addition, 
	\begin{equation*}
	p(y_i(t+1), g_i(t+1)|x_i(t+1)) \!=\!
	p(y_i(t+1)|x_i(t+1))p(g_i(t+1)|y_i(t+1),x_i(t+1)\!).
	\end{equation*}
	
	Then by (\ref{e_bayesrule}), $p(x_i(t+1)|Z_{i,t+1})$ is given by
	\begin{equation*}
	\begin{split}
	p(x_i(t+1)|Z_{i,t+1}) \propto e^{-\frac{(y_i(t+1)-x_i(t+1))^2}{2R_{t+1}}}  \cdot
	e^{-\frac{(x_i(t+1)-{\hat x_i (t+1|t)})^2}{2P_i (t+1|t)}} \cdot
	\prod\limits_{j \in {N_i (t+1)}} {{e^{ - {{{{\left( {{x_i(t+1)} - {y_j(t+1)}} \right)}^2}} \over 2}}}}.
	\end{split}
	\end{equation*}
	
	It is obvious that $x_i(t+1)|Z_{i,t+1}$ is Gaussian, i.e.,
	\begin{equation*}
	x_i(t+1)|Z_{i,t+1} \sim \mathcal{N} (\hat x_i(t+1|t+1), P_i(t+1|t+1)),
	\end{equation*}
	where $x_i(t+1|t+1)$ and $P_i(t+1|t+1)$ coincide with (\ref{e_iter}).
	
	Hence we have proved (\ref{e_filter2}), and (\ref{e_filter1}) can be then derived as shown in (\ref{e_pre}).
	
\end{proof}

Based on Theorem \ref{thm_filter}, a recursive Bayesian filtering algorithm is then designed to estimate the credit scores recursively at time $t=1,...,T$. The mean-squared error (MSE) is chosen as the criterion to derive the optimal filter.
The filtering equations can then be derived based on (\ref{e_filter1}) to (\ref{e_iter}), where the MMSE estimator could be obtained as the conditional mean, i.e.,
\begin{equation*}
MMSE(\hat x_i(t)) =  \mathop {\arg \min }\limits_{\hat x_i(t)} \mathbb{E}[ {\Vert \hat x_i(t) - x_i(t) \Vert}^2 ] = \mathbb{E}[{x_i}\left( {t } \right) | {{Z_{i,t}}}] = \hat x_i(t|t).
\end{equation*}

Similar to Kalman filter, at each time step the proposed algorithm is conducted in two phases: ``predict'' and ``update'', which is given in Algorithm \ref{a1}.

%
%
%

\begin{algorithm}
	\caption{Recursive Bayesian filtering algorithm}\label{a1}
	\begin{algorithmic}[1]
		\Require {$\hat x_i (0|0) = \bar{x}_0$, $P_i (0|0)={P}_{i,0}$}
		\For{$t = 0:T$}
		\For{$i = 1:N$}
		\State \textbf{Predict:}\;
		\State $\hat x_i (t+1|t) = a(t) \hat x_i (t|t) + b(t) u_i(t)$\;
		\State $P_i (t+1|t) = a(t)^2 P_i (t|t) + Q_{t}$\;
		\State \textbf{Update:}\;
		\State $K_{i,t+1} = {P_i (t+1|t)}/(R_{t+1} + P_i (t+1|t)+ {n_{i}(t+1)}{R_{t+1}}{P_i (t+1|t)})$\;
		\State $H_{i,t+1} = {P_i (t+1|t)}{R_{t}}/(R_{t} + P_i (t+1|t)+ {n_{i}(t+1)}{R_{t+1}}{P_i (t+1|t)})$\;
		\State $\hat x_i (t+1|t+1) = \hat x_i (t+1|t) + {K_{i,t+1}}(y_i(t+1)-\hat x_i (t+1|t))$
		
		\hspace{90pt} $+ {H_{i,t+1}} \sum\limits_{j \in {\mathcal{N}_{i}(t+1)}} (y_j(t+1) - \hat x_i (t+1|t))$\;
		\State $P_i (t+1|t+1) = (1 - K_{i,t+1} - {n_{i}(t+1)}{H_{i,t+1}})P_i (t+1|t)$\;
		\EndFor
		\EndFor
	\end{algorithmic}
\end{algorithm}

It can be shown that the precision of the derived MMSE estimator is strictly higher than that of the observation by the lender which is based only on individual attributes, i.e.,
\begin{equation}\label{P_risk}
P_i (t|t) = \dfrac{R_{t} P_i (t|t-1)}{{n_{i}(t)}P_i (t|t-1)R_{t}+R_{t}+P_i (t|t-1)} < R_t.
\end{equation}

Hence, it is reasonable to adopt the proposed filter for the risk evaluator to provide a useful credit prediction for the lender. At each time, such a prediction can then be used as a reference for the lender to  make long-term financial decisions. For example, the lender may consider lowering the loan of a company if its credit score is predicted to decrease. Furthermore, since we have shown that taking the networked information into account is able to improve the scoring precision, it is natural to speculate that the lender can also incorporate the networked information in one's own assessment, which will be studied in the next section.

\section{Recursive scoring based on dynamic interaction}\label{dynamic}

\subsection{Dynamic scoring framework}
In this part, an online scoring framework is used to recursively improve the score precision based on the dynamic interaction between the lender and the clients.

As a further step, the networked data is also used by the lender when it updates the credit score for each client. During each period, the lender publishes the current score prediction, based on which the clients then form a new network with homogeneous preference. Then at the end of the period, the lender updates a new estimate for current individual credit scores on the basis of the observation of the network, which will be used to make new predictions at the beginning of the next period.

Here we use $\bar x_i(t)$ and $\hat x_i(t)$ to denote the prediction and corrected estimation for the credit scores of client $i$ at time $t$. In each period, the interaction process mentioned above is composed of the following four steps, as shown in Fig. (\ref{Fig_interaction}).
\begin{enumerate}
	\item \textbf{True credits update:}\\ The true credit scores  of the clients evolve according to the system model,
	\begin{equation}\label{e_step1}
	x_i(t) = a(t-1)x_i(t-1) + bu_i(t-1) +w_i(t-1).
	\end{equation}
	\item \textbf{{Publishing by the lender:}}\\ The lender publishes a new prediction of credit scores based on the change of individual financial attributes:
	\begin{equation}\label{e_publish}
	\bar x_i(t) = a(t-1) \hat x_i(t-1) + bu_i(t-1).
	\end{equation}
	\item \textbf{Network formation:}\\ Each client $i$ forms its part of the new network $g_i(t)$ based on its own credit $x_i(t)$ and others' scores $\bar x_j(t)$ that are reported by the lender.
	\item \textbf{Score correction:}\\ The lender updates a new credit estimate $\hat x_i(t)$ for each client based on the observation of the current network $g_i(t)$.
\end{enumerate}

\begin{figure}
	\centering
	\includegraphics[width=0.5\textwidth]{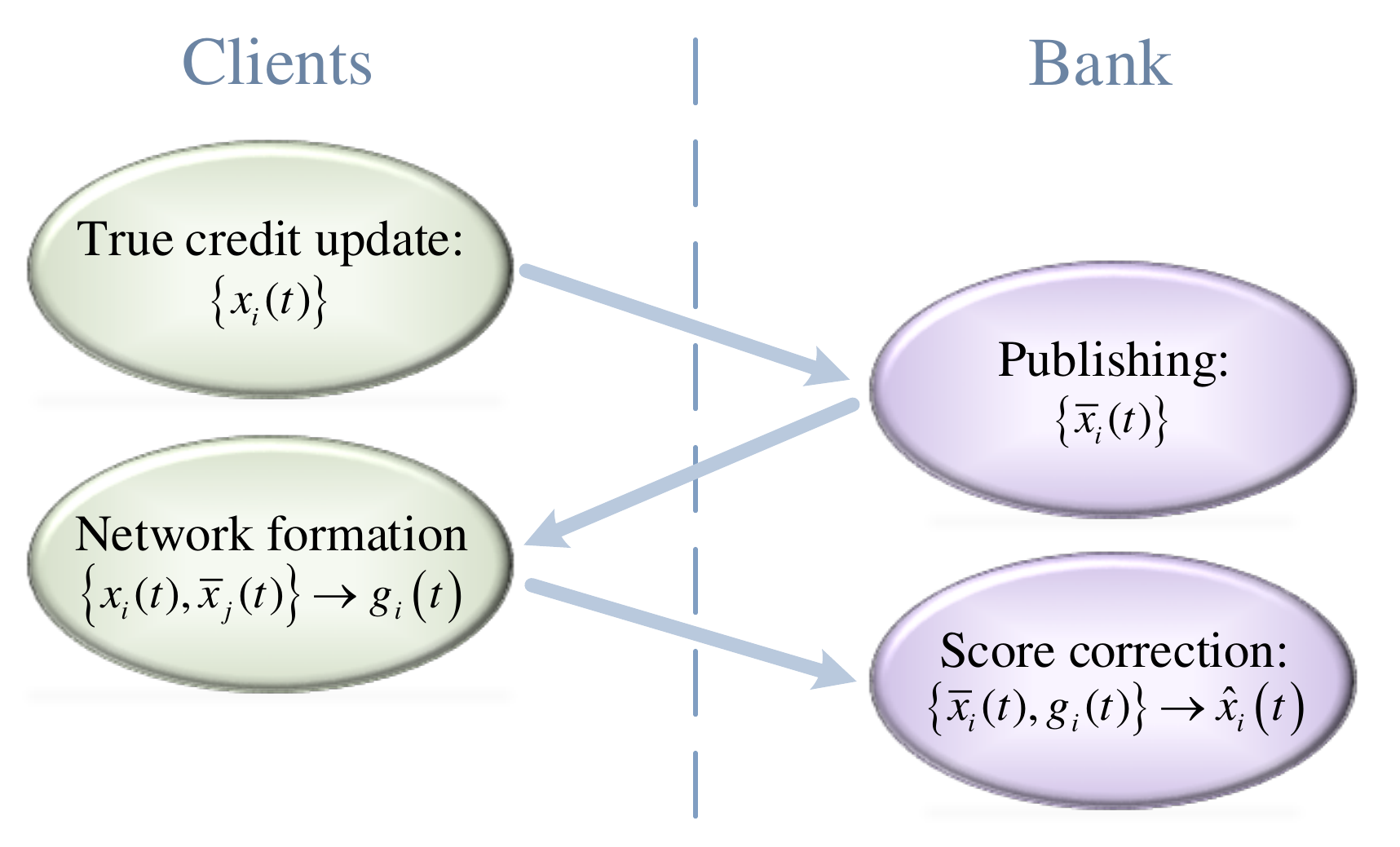}
	\caption{Interaction process at period $t$. }
	\label{Fig_interaction}
\end{figure}

Note that at the beginning, the lender gives an initial estimate $\hat x_i(0)$ based only on individual financial history of clients.
In accordance with Section \ref{Recursive}, we also have the following assumption about the initial estimate.
\begin{assumption} \label{assump_initial}
	For each client $i$, the initial credit estimate $\hat x_i(0)$ is Gaussian with $\hat x_i(0) \sim \mathcal{N}(x_i(0), \hat P_{i}(0))$.
\end{assumption}
\subsection{One-step optimal estimator}
In this section, the method for the score correction step is investigated. During each period, a one-step optimal estimator $\hat x_i (t)$ is updated based on the observation of the current network $g_i(t)$.

It is well-known that the efficient estimators are those uniformly optimal estimators in the class of unbiased estimators, whose variance realizes the Cram\'er--Rao lower bound (CRLB) \citep{lehmann2006theory}. Later in this section it will be shown that the CRLB for the considered problem is not satisfactory enough and unbiased estimators cannot meet our requirements of estimation precision (the reciprocal of the variance). Therefore, in order to further improve the estimation precision, biased estimators can be considered to realize lower variances.  The Bayes estimator is derived through average risk minimization, which could also realize a lower MSE than efficient estimators for clients with creditworthiness in the middle class.


In order to derive the optimal estimate, we define an average risk measure
\begin{equation}\label{e_averRisk}
r(\hat x_i, \alpha) = \int_{ - \infty }^{ + \infty } {\mathbb{E} [L\left( {{x_i},{{\hat x}_i}} \right)]{\alpha}\left( {{x_i}} \right)d{x_i}},
\end{equation}
where $L\left( {{x_i},{{\hat x}_i}} \right)$ is a risk function and ${{\alpha}\left( {{x_i}} \right)}$ is a positive weighting function indicating how important it is to have a low risk for different values of $x_i$.

Without loss of generality, ${{\alpha}\left( {{x_i}} \right)}$ can be normalized by
\begin{equation*}
\int_{ - \infty }^{ + \infty } {{\alpha}\left( {{x_i}} \right)d{x_i}}  = 1.
\end{equation*}

Then the optimal estimate with weighting function $\alpha (x_i)$ is obtained by minimizing the average risk in (\ref{e_averRisk}), i.e.,
\begin{equation}\label{e_BayesEst}
\hat x_{i,\alpha} (g_i) =  \mathop {\arg \min }\limits_{\hat x_i} r(\hat x_i, \alpha),
\end{equation}
which is also known as the Bayes estimator.

When we consider the quadratic cost $L\left( {{x_i},{{\hat x}_i}} \right) = \left\| {{x_i}-{{\hat x}_i}} \right\|^2$, (\ref{e_BayesEst}) can be actually regarded as the MSE estimator of $x_i$ given observation $g_i$. It is also quite inspiring to consider the Bayesian interpretation of (\ref{e_BayesEst}). By considering $x_i$ as the outcome of a random variable $X_i$ whose prior distribution is given by $\alpha(x_i)$, the solution to (\ref{e_BayesEst}) turns out to be the posterior mean, i.e.,
\begin{equation} \label{e_post}
\hat x_{i,\alpha} (g_i) = \mathbb{E} [x_i|g_i].
\end{equation}

Under the framework of recursive scoring based on dynamic interaction, at each period, the ``prior knowledge'' $\alpha(x_i)$ is chosen as the probability density function (pdf) of the predicted score $\bar x_i$. Then the Bayes estimator for each client could be derived.

\begin{theorem}
	Under \textit{Assumption} \ref{assump_initial}, the posterior $x_i\left(t \right)|g_i\left(t \right)$ in (\ref{e_post}) is Gaussian at each time $t \geq 0$, i.e., $x_i\left(t \right)|g_i\left(t \right) \sim \mathcal{N}(\hat x_i(t), \hat P_i \left(t \right))$, with
	\begin{equation}\label{e_Bayes}
	\begin{aligned}
	& \hat x_i(t) = \bar x_i(t) + \frac{{\bar P_{i}(t)}}{1+{\bar P_{i}(t)}{n_{i}(t)}} \sum\limits_{j \in {\mathcal{N}_{i}(t)}} (\bar x_j(t) - \bar x_i (t)),\\
	& \hat P_i \left(t \right) = \frac{{\bar P_{i}(t)}}{1+{\bar P_{i}(t)}{n_{i}(t)}},
	\end{aligned}
	\end{equation}
	where
	\begin{align}
	& \bar x_i(t) = a(t-1) \hat x_i(t-1) + bu_i(t-1),\label{eq:prediction1} \\
	& \bar P_{i}(t) = a(t-1)^2 \hat P_i \left(t-1 \right) + Q_{t-1}. \label{eq:prediction2}
	\end{align}
\end{theorem}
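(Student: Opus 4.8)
The plan is to proceed by induction on $t$, mirroring the two-phase ``predict''--``update'' structure of the Bayesian filter in Theorem \ref{thm_filter}. The base case $t=0$ is exactly \textit{Assumption} \ref{assump_initial}, which states $\hat x_i(0) \sim \mathcal{N}(x_i(0), \hat P_i(0))$, so the corrected estimate is Gaussian at time $0$. For the inductive step I would assume that at time $t-1$ the posterior $x_i(t-1)\,|\,g_i(t-1)$ is Gaussian with mean $\hat x_i(t-1)$ and variance $\hat P_i(t-1)$, and show the same form propagates to time $t$.

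First, for the prediction step, I would push the Gaussian posterior at $t-1$ through the linear dynamics (\ref{e_step1}). Since $x_i(t) = a(t-1)x_i(t-1) + bu_i(t-1) + w_i(t-1)$ with $w_i(t-1)\sim\mathcal{N}(0,Q_{t-1})$ independent of $x_i(t-1)$, the predicted law of $x_i(t)$ is again Gaussian, with mean and variance given exactly by (\ref{eq:prediction1}) and (\ref{eq:prediction2}). This predicted density is then taken as the prior $\alpha(x_i)\propto \exp(-(x_i-\bar x_i(t))^2/(2\bar P_i(t)))$ in the Bayes estimator (\ref{e_post}), consistent with the framework's choice of ``prior knowledge''.

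The core of the argument is the update step. Using (\ref{e_post}) and Bayes' rule, the posterior satisfies $p(x_i(t)\,|\,g_i(t)) \propto \alpha(x_i)\, p(g_i(t)\,|\,x_i(t))$, where the network likelihood is given by (\ref{e_GonXY}) with the reported scores $\bar x_j(t)$ in place of $y_j(t)$, i.e.\ $p(g_i(t)\,|\,x_i(t)) = \prod_{j\in\mathcal{N}_i(t)} \nu\, e^{-(x_i(t)-\bar x_j(t))^2/2}$. Here I would stress that only the connections that actually form (the neighbors $j\in\mathcal{N}_i(t)$, for which $g_{ij}(t)=1$) enter the likelihood, so each factor is a Gaussian kernel in $x_i(t)$ and the constant $\nu^{n_i(t)}$ is absorbed into the normalization. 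Taking logarithms, the exponent becomes $-\tfrac12\big[(x_i-\bar x_i)^2/\bar P_i + \sum_{j\in\mathcal{N}_i(t)}(x_i-\bar x_j)^2\big]$, which is quadratic in $x_i$; hence the posterior is Gaussian, establishing the claim of the theorem.

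It then remains to read off the mean and variance by completing the square --- the one genuinely computational step. The coefficient of $x_i^2$ gives the posterior precision $1/\bar P_i(t) + n_i(t)$, immediately yielding $\hat P_i(t) = \bar P_i(t)/(1+\bar P_i(t)\, n_i(t))$. Collecting the linear term gives the posterior mean $\hat x_i(t) = \hat P_i(t)\big(\bar x_i(t)/\bar P_i(t) + \sum_{j\in\mathcal{N}_i(t)}\bar x_j(t)\big)$, and a short rearrangement recasts this in the ``innovation'' form of (\ref{e_Bayes}), namely $\hat x_i(t) = \bar x_i(t) + \tfrac{\bar P_i(t)}{1+\bar P_i(t)\, n_i(t)}\sum_{j\in\mathcal{N}_i(t)}(\bar x_j(t)-\bar x_i(t))$. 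I expect the only delicate point to be the bookkeeping in this completion of the square and verifying that the two equivalent expressions for the mean agree; the Gaussianity itself is automatic once the likelihood is written as a product of Gaussian kernels.
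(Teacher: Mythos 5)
Your proposal is correct and follows essentially the same route as the paper's proof: induction on $t$, pushing the Gaussian posterior through the linear dynamics to get the prior $\mathcal{N}(\bar x_i(t), \bar P_i(t))$, then applying Bayes' rule with the neighbors-only likelihood $\prod_{j\in\mathcal{N}_i(t)}\nu e^{-(x_i-\bar x_j)^2/2}$ and identifying the resulting Gaussian. The only difference is that you carry out the completion of the square explicitly (and your two expressions for the mean do agree, since $1-\hat P_i n_i = \hat P_i/\bar P_i$), whereas the paper leaves this as ``comparing the coefficients.''
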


\begin{proof}
	Here we prove the $x_i\left(t \right)|g_i\left(t \right)$ is Gaussian  by induction.
	
	To begin with, we have $\hat x_i(0) \sim \mathcal{N}(x_i(0), \hat P_{i,0})$.
	
	We assume that at time $t-1$, it holds that:
	\begin{equation*}
	x_i\left(t-1 \right)|g_i\left(t-1 \right) \sim \mathcal{N}(\hat x_i(t-1), \hat P_i \left(t-1 \right)),
	\end{equation*}
	for some $\hat x_i(t-1)$ and $\hat P_i \left(t-1 \right)$.
	
	Then we will to show that the above also holds at time $t$. By (\ref{e_step1}) and (\ref{e_publish}), the prior distribution of $x_i(t)$ is Gaussian with $ \mathcal{N} (\bar x_i(t), \bar P_{i}(t)) $, where $ \bar x_i(t) $ and $ \bar P_{i}(t)) $ are given by (\ref{eq:prediction1}) and (\ref{eq:prediction2}) respectively. In the sequel, explicit  dependence on `$t$' is omitted for the sake of brevity.
	
	By Bayesian rule, the pdf of $x_i|g_i$ is given by
	\begin{equation*}\label{post_pdf}
	p(x_i|g_i) = \frac{{\alpha}(x_i) p(g_i|x_i)}{p(g_i)}.
	\end{equation*}
	
	Since we only use neighbors' information in $g_i$, the Bernoulli distribution of $g_i$ is given by
	\begin{equation*}
	p(g_i|x_i) = \prod\limits_{j \in {\mathcal{N}_{i}(t)}} {{\nu e^{ - {{{{\left( {{x_i} - {\bar x_j}} \right)}^2}} \over 2}}}}.
	\end{equation*}
	
	Hence, $p(x_i|g_i)$ can be computed by
	\begin{equation*}
	p(x_i|g_i) \propto
	e^{-\frac{(x_i-{\bar x_i})^2}{2 \bar P_{i}(t)}} \cdot
	\prod\limits_{j \in {N_{i}(t)}} {{e^{ - {{{{\left( {{x_i} - {\bar x_j}} \right)}^2}} \over 2}}}}.
	\end{equation*}
	
	Therefore, $x_i(t)|g_i(t)$ must be Gaussian with
	\begin{equation*}
	x_i\left(t \right)|g_i\left(t \right) \sim \mathcal{N}(\hat x_i(t), \hat P_i \left(t \right)),
	\end{equation*}
	where $\hat x_i(t)$ and $\hat P_i (t)$ are given in (\ref{e_Bayes}), which can be obtained by comparing the coefficients.
\end{proof}

In conclusion, at each time $t$, given the observation of the financial network, the credit scores estimation of each client $i$ is updated by an optimal estimator
\begin{equation*}
\hat x_{i,\alpha} (g_i) = \mathbb{E} [x_i(t)| g_i(t)] = \hat x_i(t),
\end{equation*}
which is given by (\ref{e_Bayes}) and is used in the credit correction step in Fig. \ref{Fig_interaction}.

\subsection{Performance analysis}

\subsubsection{Estimation error and precision} 

The performance of the proposed recursive Bayes estimator is analyzed in this part to show the feasibility of incorporating networked data when updating clients' credit scores. It is shown that the uncertainty for individual creditworthiness could be reduced when compared to that based only on individual attributes, and the estimation precision is improved recursively in the dynamic interaction between the lender and the clients.

The MSE is a widely-used criterion to analyze the performance of estimators, which can be split into
\begin{equation*}\label{e_MSE}
\begin{aligned}
MSE[\hat x_i] & = \mathbb{E} [\Vert \hat x_i - x_i \Vert ^2] \\
& = \mathbb{E} [\Vert \hat x_i - \mathbb{E}[\hat x_i] \Vert ^2] + \Vert \mathbb{E}[\hat x_i] - x_i \Vert ^2 \\
& = Var(\hat x_i) + Bias(\hat x_i)^2,
\end{aligned}
\end{equation*}
where $Var(\hat x_i)$ and $Bias(\hat x_i)$ denote the variance and bias of the estimator, respectively.


Among the unbiased estimators, the efficient estimator is uniformly optimal with a minimal variance (and MSE) equal to CRLB. In this paper, a biased estimator is considered to further improve the estimation precision (the reciprocal of the variance). We will then show that the proposed Bayes estimator could realize a strictly smaller variance than CRLB with a bounded bias. It will be shown in Section 4.3.2 that its MSE is also lower than the efficient estimator for clients with credit scores in a certain range.



\begin{lemma}\label{lemma:crlb}
	At each period $t$ when $x_i(t)$ is required to be estimated from $g_i(t)$, the Bayes estimator given in (\ref{e_Bayes}) has a higher estimation precision than all the unbiased estimators, which means $\hat P_i(t)$ is strictly smaller than $CRLB(x_i(t))$.
\end{lemma}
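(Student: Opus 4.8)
The plan is to first pin down $CRLB(x_i(t))$ in closed form and then compare it directly with the posterior variance $\hat P_i(t)$ supplied by the preceding theorem. Since the Cram\'er--Rao bound is the reciprocal of the Fisher information, I would start from the likelihood of the network observation $g_i(t)$. Conditioning on the realized neighbour set, this is exactly the product already used in the proof of the theorem,
\begin{equation*}
p(g_i(t)\mid x_i(t)) = \prod_{j \in \mathcal{N}_i(t)} \nu\, e^{-\frac{(x_i(t)-\bar x_j(t))^2}{2}},
\end{equation*}
whose logarithm is quadratic in $x_i(t)$, namely $n_i(t)\log\nu - \tfrac{1}{2}\sum_{j\in\mathcal{N}_i(t)}(x_i(t)-\bar x_j(t))^2$.

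The decisive step is the evaluation of the Fisher information. Differentiating the log-likelihood twice in $x_i(t)$ removes the linear and constant parts and leaves the constant curvature $-n_i(t)$, so that $I(x_i(t)) = -\mathbb{E}[\partial_{x_i}^2 \log p] = n_i(t)$ and therefore $CRLB(x_i(t)) = 1/n_i(t)$. The intuition is that the Gaussian-kernel connection probability makes every neighbour contribute exactly one unit of likelihood precision.

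With the bound in hand, the comparison is purely algebraic. Inserting $\hat P_i(t) = \bar P_i(t)/(1+\bar P_i(t)\,n_i(t))$ from \eqref{e_Bayes}, the claim $\hat P_i(t) < CRLB(x_i(t))$ becomes, in precision form, $\hat P_i(t)^{-1} = \bar P_i(t)^{-1} + n_i(t) > n_i(t) = CRLB(x_i(t))^{-1}$; the strict gain is exactly the prior precision $\bar P_i(t)^{-1} > 0$ injected by the Bayesian update. Equivalently, cross-multiplying by the positive quantities $n_i(t)$ and $1+\bar P_i(t)\,n_i(t)$ reduces the inequality to $0 < 1$, so it holds for every $\bar P_i(t) > 0$ and every $n_i(t) \ge 1$ (the case $n_i(t)=0$ gives $CRLB = +\infty$ and is trivial).

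I expect the only genuine obstacle to be the identification $I(x_i(t)) = n_i(t)$. The subtlety is that $g_i(t)$ records only realized connections, so one must condition on the neighbour set and treat $n_i(t)$ as known; under that conditioning the log-likelihood is exactly quadratic with deterministic curvature $n_i(t)$, and the Fisher information is unambiguous. Once this is granted, no assumption on the value of $x_i(t)$ is needed, which is consistent with the sharper, score-dependent comparison — where the estimator's bias enters and restricts the clients to a middle range — being postponed to the MSE analysis of the following subsection.
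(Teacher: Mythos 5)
Your proposal is correct and follows essentially the same route as the paper's own proof: compute the Fisher information of the conditional network likelihood as $I_F(x_i(t)) = n_i(t)$, then observe that the posterior precision satisfies $\hat P_i(t)^{-1} = \bar P_i(t)^{-1} + n_i(t) > n_i(t)$, giving $\hat P_i(t) < CRLB(x_i(t))$. Your treatment is in fact slightly more careful than the paper's, since you make explicit the conditioning on the realized neighbour set and dispose of the trivial $n_i(t)=0$ case.
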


\begin{proof}
	First, the Fisher information matrix (FIM) of $x_i$ computed from the log-likelihood function is
	\begin{equation*}
	I_F(x_i(t)) = -\mathbb{E} \left[\frac{\partial ^2 log(p(g_i(t);x_i(t)))}{\partial x_i(t)^2} \right ] = n_{i}(t).
	\end{equation*}
	
	Then, by (\ref{e_Bayes}), it holds that
	\begin{equation*}
	\hat P_i(t)^{-1} = n_{i}(t) + \bar P_{i}(t)^{-1} > I_F(x_i(t)),
	\end{equation*}
	and thus
	\begin{equation*}
	\hat P_i(t) < I_F(x_i(t))^{-1} = CRLB(x_i(t)).
	\end{equation*}
\end{proof}

For the proposed Bayes estimator, at each step a lower variance is achieved at the cost of a bias. In order to guarantee the accuracy of score estimation in the dynamic process, we will then show that its bias is bounded throughout the whole time interval. 

\begin{lemma}\label{lemma:state}
	(Bounds of prediction value) Assuming $|\bar{x}_i(0)| \leq M_0$, where $M_0 = \max \limits_{i=1,\ldots N} |\bar{x}_i(0)|$, the prediction value of the credit scorings given by equation \eqref{eq:prediction1} is bounded if $0<a(t) < 1$, for any $t \geq 0$.
\end{lemma}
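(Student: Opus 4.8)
The plan is to exploit the fact that the correction step \eqref{e_Bayes} expresses the corrected estimate $\hat x_i(t)$ as a \emph{convex combination} of the predicted scores $\bar x_j(t)$ over client $i$ and its neighbors, so that correction never inflates the magnitude of the scores; the only possible source of growth is then the one-step predictor \eqref{eq:prediction1}, which is a contraction precisely when $a(t)<1$. Concretely, I would first reduce the claim to a uniform bound on the corrected estimates $\hat x_i(t)$, and then propagate that bound through the prediction recursion.

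First I would rewrite \eqref{e_Bayes} by setting $c_i(t)=\bar P_i(t)/(1+\bar P_i(t)\,n_i(t))$ and collecting terms:
\begin{equation*}
\hat x_i(t)=\frac{1}{1+\bar P_i(t)\,n_i(t)}\,\bar x_i(t)+c_i(t)\sum_{j\in\mathcal N_i(t)}\bar x_j(t).
\end{equation*}
A short induction using \eqref{eq:prediction2} shows $\bar P_i(t)\ge 0$ (indeed $\bar P_i(t)=a(t-1)^2\hat P_i(t-1)+Q_{t-1}$ with $Q_{t-1}\ge 0$ and $\hat P_i\ge 0$), so both weights are nonnegative, and they sum to $\frac{1}{1+\bar P_i(t)n_i(t)}+n_i(t)c_i(t)=1$. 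Hence $\hat x_i(t)$ lies in the convex hull of $\{\bar x_j(t):j\in\{i\}\cup\mathcal N_i(t)\}$, which immediately yields $|\hat x_i(t)|\le M_t$, where $M_t:=\max_{i}|\bar x_i(t)|$.

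Next I would feed this into the predictor. Writing $B:=\sup_{i,t}|b\,u_i(t)|$ (finite because the attribute changes $u_i$ are bounded), \eqref{eq:prediction1} gives
\begin{equation*}
|\bar x_i(t+1)|\le a(t)\,|\hat x_i(t)|+|b\,u_i(t)|\le a(t)\,M_t+B,
\end{equation*}
so taking the maximum over $i$ yields the scalar recursion $M_{t+1}\le a(t)M_t+B$, whose initial value $M_0=\max_i|\bar x_i(0)|$ is exactly the quantity in the hypothesis. Setting $\bar a:=\sup_t a(t)<1$ and iterating gives
\begin{equation*}
M_t\le \bar a^{\,t}M_0+B\sum_{k=0}^{t-1}\bar a^{\,k}\le M_0+\frac{B}{1-\bar a}\qquad\text{for all }t\ge 0,
\end{equation*}
which is the desired uniform bound on the prediction values.

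The hard part will be the passage from the per-step strict inequality $a(t)<1$ to a genuine uniform contraction: if $a(t)$ were allowed to approach $1$ the geometric series could diverge, so the argument really requires $\bar a=\sup_t a(t)<1$, together with a uniformly bounded input $b\,u_i(t)$. I expect the clean write-up to make these two standing assumptions explicit; once they are in place, everything else is the convex-combination bookkeeping above and the solution of a first-order linear difference inequality.
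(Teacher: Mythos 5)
Your proof is correct and follows essentially the same route as the paper's: both exploit that the Bayes correction \eqref{e_Bayes} writes the corrected score as a convex combination of the predicted scores $\bar x_j(t)$ (so the correction step cannot increase the maximum magnitude) and then iterate the predictor \eqref{eq:prediction1} as a first-order linear difference inequality. If anything, your write-up is more careful than the paper's, which ends the same iteration by asserting boundedness directly from $0<a(t)<1$ --- precisely the gap you flag, since without $\sup_t a(t)<1$ and a uniform bound on $b(t)u_i(t)$ the accumulated input term in the iterated bound can diverge.
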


\begin{proof}
	According to  \eqref{eq:prediction1}, we have
	\begin{equation*}
	\hat{x}_i(t) = \frac{1}{a(t)} (\bar x_i(t+1) - b(t)u_i(t)).
	\end{equation*}
	Thus,
	\begin{equation*}
	\frac{1}{a(t)} (\bar x_i(t+1) - b(t)u_i(t)) = \bar x_i(t) + \frac{{\bar P_{i}(t)}}{1+{\bar P_{i}(t)}{n_i(t)}} \sum\limits_{j \in {\mathcal{N}_{i,t}}} (\bar x_j(t) - \bar x_i (t)),\\
	\end{equation*}
	then
	\begin{small}
		\begin{equation*}
		\begin{split}
		\bar x_i(t+1)  &= a(t) [\bar x_i(t) + \frac{{\bar P_{i}(t)}}{1+{\bar P_{i}(t)}{n_i(t)}} \sum\limits_{j \in {\mathcal{N}_{i,t}}} (\bar x_j(t) - \bar x_i (t))] + b(t) u_i(t)\\
		& = a(t) [ \frac{1}{1+{\bar P_{i}(t)}{n_i(t)}} \bar x_i(t) + \frac{{\bar P_{i}(t)}}{1+{\bar P_{i}(t)}{n_i(t)}} \sum\limits_{j \in {\mathcal{N}_{i,t}}} \bar x_j(t)] + b(t) u_i(t).
		\end{split}
		\end{equation*}
	\end{small}
	Considering the first step, we have that
	\begin{small}
		\begin{equation*}
		\begin{split}
		|\bar x_i(1)|  & = |a(0) [ \frac{1}{1+{\bar P_{i,0}}{n_{i,0}}} \bar x_i(0) + \frac{{\bar P_{i,0}}}{1+{\bar P_{i,0}}{n_i(t)}} \sum\limits_{j \in {\mathcal{N}_{i,t}}} \bar x_j(0)] + b(0) u_i(0)|\\
		& = |a(0) [ \frac{1}{1+{\bar P_{i,0}}{n_{i,0}}} \bar x_i(0) + \frac{{\bar P_{i,0}}}{1+{\bar P_{i,0}}{n_i(t)}} \sum\limits_{j \in {\mathcal{N}_{i,t}}} \bar x_j(0)] + b(0) u_i(0)|\\
		&\leq |a(0) M_0 + b(0)u_i(0)| \\
		&\leq |a(0) M_0 |+ |b(0)u_i(0)|.
		\end{split}
		\end{equation*}
	\end{small}
Iterating, it holds that
	\begin{equation*}
	|\bar x_i(t)|  \leq M_0 \prod\limits_{k = 0}^{t - 1} {a(k)}  + \sum\limits_{k = 0}^{t - 1} {\left[ {\left( {\prod\limits_{l = k + 1}^{t - 1} {a(l)} } \right)\left| {b\left( k \right)u\left( k \right)} \right|} \right]} .
	\end{equation*}
	
	Since $0<a(t)< 1$, $\bar x_i(t)$ can be bounded by a constant $M$ for any $t \geq 0 $, i.e., $|\bar x_i(t)| \leq M$.
\end{proof}

\begin{assumption}\label{assu:Q}
	$Q_t$ is bounded with a lower bound and an upper bound $Q_{l}$, $Q_{u}$, respectively, i.e., $ Q_{l} \leq Q_t \leq Q_{u}$ for any $t>0$.
\end{assumption}

\begin{theorem}[Bound of estimation precision $\hat{P}_{i}(t)$]
	The estimation precision $\hat{P}_{i}(t)$ is bounded by a lower and upper bound, respectively, i.e., $ {P}_{l}(t) \leq \hat{P}_{i}(t) \leq {P}_{u}(t) < CRLB(x_i(t))$.
\end{theorem}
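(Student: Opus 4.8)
The plan is to exploit the two–line recursion from the preceding theorem, namely $\bar P_{i}(t)=a(t-1)^2\hat P_i(t-1)+Q_{t-1}$ followed by $\hat P_i(t)=\bar P_{i}(t)/(1+\bar P_{i}(t)n_i(t))$, together with the monotonicity of the update map. First I would record the elementary facts about $\phi(x):=x/(1+x\,n_i(t))$ on $[0,\infty)$: it is strictly increasing, it satisfies $\phi(x)\le x$, and crucially $\phi(x)<1/n_i(t)$ for every finite $x\ge 0$. Since Lemma \ref{lemma:crlb} gives $CRLB(x_i(t))=1/n_i(t)$, this last inequality already reproduces $\hat P_i(t)=\phi(\bar P_{i}(t))<CRLB(x_i(t))$ and will be the mechanism that forces the upper bound below the CRLB regardless of its magnitude.

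For the upper bound I would decouple the recursion. Because $\phi(x)\le x$, we have $\hat P_i(t)\le\bar P_{i}(t)$, hence
\begin{equation*}
\bar P_{i}(t)=a(t-1)^2\hat P_i(t-1)+Q_{t-1}\le a(t-1)^2\,\bar P_{i}(t-1)+Q_u,
\end{equation*}
using Assumption \ref{assu:Q}. This is the same affine contraction analysed in Lemma \ref{lemma:state}, so iterating yields an explicit bound
\begin{equation*}
\bar P_{i}(t)\le \Big(\prod_{k=0}^{t-1}a(k)^2\Big)\bar P_{i}(0)+Q_u\sum_{k=0}^{t-1}\prod_{l=k+1}^{t-1}a(l)^2=:\bar P_u(t),
\end{equation*}
which is finite because $0<a(t)<1$. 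Setting $P_u(t):=\bar P_u(t)/(1+\bar P_u(t)\,n_i(t))=\phi(\bar P_u(t))$, monotonicity of $\phi$ gives $\hat P_i(t)=\phi(\bar P_{i}(t))\le\phi(\bar P_u(t))=P_u(t)$, while the structural inequality $\phi(x)<1/n_i(t)$ gives $P_u(t)<CRLB(x_i(t))$ at once.

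For the lower bound I would use nonnegativity of the precision together with $Q_{t-1}\ge Q_l$ to obtain $\bar P_{i}(t)=a(t-1)^2\hat P_i(t-1)+Q_{t-1}\ge Q_l$, and then apply monotonicity of $\phi$ again to get $\hat P_i(t)=\phi(\bar P_{i}(t))\ge \phi(Q_l)=Q_l/(1+Q_l\,n_i(t))=:P_l(t)$, which is strictly positive whenever $Q_l>0$ (and one may further replace $n_i(t)$ by $N-1$ to obtain a uniform constant). Chaining the two estimates delivers $P_l(t)\le\hat P_i(t)\le P_u(t)<CRLB(x_i(t))$.

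The main obstacle, and the point that needs care, is the upper bound: the genuine recursion for $\hat P_i$ is coupled through the nonlinear map $\phi$, so one cannot iterate it directly. The trick is to pass to the linear majorant via $\hat P_i\le\bar P_{i}$, iterate that (exactly as in Lemma \ref{lemma:state}, which supplies boundedness of the product and of the geometric sum under $a(t)<1$), and only then push the resulting bound back through $\phi$. It is precisely the property $\phi(x)<1/n_i(t)$ for all finite $x$ that guarantees $P_u(t)$ stays strictly below the CRLB no matter how large $\bar P_u(t)$ is; the remaining steps are routine monotonicity arguments.
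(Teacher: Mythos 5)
Your proof is correct, but it executes the key step differently from the paper. The paper works in the precision (reciprocal) domain: it writes $\hat P_i^{-1}(t) = \bar P_i^{-1}(t) + n_i(t)$, bounds the coupled recursion from below as $\hat P_i^{-1}(t) \geq m_0\,\hat P_i^{-1}(t-1) + n_i(t)$ with $m_0 = \bigl(\bar a^2 + (Q_l^{-1}+N)Q_u\bigr)^{-1}$, and iterates to get $\hat P_i^{-1}(t) \geq m_0^t\,\hat P_i^{-1}(0) + \sum_{k=0}^{t} m_0^k n_i(t-k)$; the strict gap below $CRLB(x_i(t)) = 1/n_i(t)$ then comes from the $k=0$ term of that sum plus the strictly positive remainder. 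You instead stay in the variance domain: you decouple the recursion via $\hat P_i(t) = \phi(\bar P_i(t)) \leq \bar P_i(t)$ with $\phi(x) = x/(1+x\,n_i(t))$, iterate the resulting affine majorant $\bar P_i(t) \leq a(t-1)^2 \bar P_i(t-1) + Q_u$, and push the answer back through the monotone map $\phi$, using the structural fact $\phi(x) < 1/n_i(t)$ for all finite $x$ to land strictly below the CRLB. Your route is more modular and makes the CRLB comparison transparent (it is essentially Lemma \ref{lemma:crlb} applied once, independently of how large the iterated bound is), whereas the paper's precision-domain iteration yields a tighter upper bound that accumulates the historical neighbor counts $n_i(t-k)$ and thus also exhibits the recursive improvement of precision over time, which your bound discards. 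The lower bounds are essentially identical ($\bar P_i(t) \geq Q_l$ pushed through the update, with $n_i(t) \leq N$ giving a uniform constant). Two cosmetic points: your iteration should start from $\hat P_i(0)$ (the initial variance of Assumption \ref{assump_initial}) rather than an undefined $\bar P_i(0)$, and, like the paper, your lower bound is only nontrivial under the implicit assumption $Q_l > 0$.
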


\begin{proof}
	First, we prove $\hat{P}_{i}(t)$ has a lower bound. According to  \eqref{e_Bayes},
	\begin{equation*}
	\bar P_{i}(t) = a(t-1)^2 \hat P_i \left(t-1 \right) + Q_{t-1} \geq Q_{l},
	\end{equation*}
	thus $\frac{1}{\bar P_{i}(t)} \leq  \frac{1}{Q_{l}} $.
	From \eqref{e_Bayes}, we have
	\begin{equation*}\label{eq:hatP}
	\frac{1}{\hat P_i \left(t \right)} = \frac{1}{\bar P_i \left(t \right)} +  n_i(t) \leq \frac{1}{Q_{l}} + N,
	\end{equation*}
	where $N$ is the number of clients, denoting $ P_l = (\frac{1}{Q_{l}} + N)^{-1}$, we have
	$\hat P_{i}(t) \geq P_l$.

	Secondly, we prove that $\hat{P}_{i}(t)$ has an upper bound. According to Eq. \eqref{e_Bayes}, \eqref{eq:prediction2}, and Assumption \ref{assu:Q}, we have
	\begin{equation*}\label{eq:P1}
	\begin{split}
	\hat P_i \left(t \right) & =   \frac{a(t-1)^2 \hat P_i \left(t-1 \right) + Q_{t-1}}{1 + (a(t-1)^2 \hat P_i \left(t-1 \right) + Q_{t-1})n_i(t) }. \\
	\end{split}
	\end{equation*}
	Hence
	\begin{equation*}\label{eq:P3}
	\begin{split}
	\frac{1}{\hat P_i \left(t \right)} &=  \frac{1}{\hat P_i \left(t-1 \right)} \frac{1}{(a(t-1)^2 + \hat P_i^{-1} \left(t-1 \right)Q_{t-1})}   +  n_i(t) \\
	& \geq \frac{1}{\hat P_i \left(t-1 \right)} \frac{1}{a(t-1)^2 + (Q_l^{-1} + N)Q_u} + n_i(t) \\
	& \geq m_0^t  \frac{1}{\hat P_i \left(0 \right)} + \sum_{k=0}^{t}  m_0^k n_{i,t-k},
	\end{split}
	\end{equation*}
	where $m_0 = \frac{1}{{\bar a}^2 + (Q_l^{-1} + N)Q_u}$ and $\bar a = \mathop {\max }\limits_{0 \leq k \leq t} a(k) $.
	
	Therefore,
	\begin{equation*}\label{eq:P4}
	\begin{split}
	\hat P_i \left(t \right) &\leq (m_0^t  \frac{1}{\hat P_i \left(0 \right)} + \sum_{k=0}^{t}  m_0^k n_{i}(t-k))^{-1},
	\end{split}
	\end{equation*}
	we  see that $\hat P_i \left(t \right)$ is bounded by $P_u = (m_0^t  \frac{1}{\hat P_i \left(0 \right)} + \sum_{k=0}^{t}  m_0^k n_{i}(t-k))^{-1}$, which is strictly smaller than $CRLB(x_i(t))$.
\end{proof}

\subsubsection{A special case}\label{section:special} 

When only unbiased estimators are considered, efficient estimators are optimal with MMSE equal to CRLB.
However, when all estimators are taken into account, there does not exist an estimator that is uniformly optimal for all values of $x_i \in [0, M]$. One reason is that the bias is always dependent on the specific value of the parameter to be estimated. Therefore, the Bayes estimator is designed to realize a better estimation performance for a subset of clients. In this part, such intuition is illustrated by a special case, where the proposed estimator is able to realize a lower MSE than all efficient estimators for clients in the middle class.

\begin{assumption}\label{ass:uni}
Consider a short period where the true credit scores are assumed to be constant, i.e., $x_i = x_i(0)$, for any $t \geq 0$, and $\lbrace x_i \rbrace_{i=1}^N$ is uniformly distributed on $[0, M]$ with $M > 3$.
\end{assumption}

\begin{theorem}\label{uniform}
	 For the middle-class clients with $x_i(t) \in [3 , M-3]$, the estimator in (\ref{e_Bayes}) is unbiased when $N \to +\infty$ (which also indicates consistency of the estimator) and the corresponding MSE is lower than that of the efficient estimator, i.e $MSE(\hat x_i(t)) \leq CRLB(x_i(t))$ under Assumption \ref{ass:uni}.
\end{theorem}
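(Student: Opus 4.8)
The plan is to specialize the recursion to the constant-score regime of Assumption~\ref{ass:uni} and then analyze the \emph{frequentist} bias and variance of the estimator \eqref{e_Bayes} as $N\to\infty$, comparing them against $CRLB(x_i)=1/n_i$ from Lemma~\ref{lemma:crlb}. Under Assumption~\ref{ass:uni}, taking the published predictions to be centered at the constant true scores ($\bar x_j=x_j$), \eqref{e_Bayes} reduces to $\hat x_i = x_i + \hat P_i\sum_{j\in\mathcal{N}_i}(x_j-x_i)$ with $\hat P_i=\bar P_i/(1+\bar P_i n_i)$, and the only remaining randomness is the network: each edge indicator $g_{ij}$ is an independent Bernoulli variable with success probability $p_{ij}=\nu e^{-(x_i-x_j)^2/2}$. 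Since $\{x_j\}$ is uniform on $[0,M]$, every sum $\sum_j p_{ij}\,\phi(x_j-x_i)$, after dividing by $N$, is a Riemann approximation of $(N\nu/M)\int_0^M e^{-(x_i-s)^2/2}\phi(s-x_i)\,ds$. First I would establish concentration of $n_i$ about $\lambda_i:=\sum_j p_{ij}\approx (N\nu/M)\int_0^M e^{-(x_i-s)^2/2}\,ds$, so that for large $N$ both $\hat P_i\approx 1/\lambda_i$ and $CRLB\approx 1/\lambda_i$ become effectively deterministic.

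The second step is the bias. Here $\mathbb{E}[\hat x_i-x_i]=\hat P_i\sum_j p_{ij}(x_j-x_i)$, and the integral factor evaluates exactly to $\int_0^M e^{-(x_i-s)^2/2}(s-x_i)\,ds = e^{-x_i^2/2}-e^{-(M-x_i)^2/2}$. After normalizing by $\lambda_i$ this leaves a bias of order $\tfrac{1}{\sqrt{2\pi}}\bigl(e^{-x_i^2/2}-e^{-(M-x_i)^2/2}\bigr)$, which for $x_i\in[3,M-3]$ is bounded by $e^{-9/2}/\sqrt{2\pi}$ and hence negligible; this is the sense in which the estimator is ``unbiased'' in the limit and, combined with the vanishing variance below, consistent. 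This is precisely where the threshold $3$ and the requirement $M>3$ enter: the window $[3,M-3]$ is chosen so that the truncated Gaussian agrees with the full Gaussian up to exponentially small boundary leakage.

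The third step, which produces the strict improvement over the CRLB, is the variance. By independence of the $g_{ij}$, $Var(\hat x_i)=\hat P_i^2\sum_j p_{ij}(1-p_{ij})(x_j-x_i)^2$. The leading term obeys $\sum_j p_{ij}(x_j-x_i)^2\approx (N\nu/M)\int e^{-(x_i-s)^2/2}(s-x_i)^2\,ds = (N\nu/M)\sqrt{2\pi}=\lambda_i$, using the fact that $\int u^2 e^{-u^2/2}\,du=\int e^{-u^2/2}\,du=\sqrt{2\pi}$; thus $\hat P_i^2\lambda_i\approx 1/\lambda_i=CRLB$. The Bernoulli correction $-\sum_j p_{ij}^2(x_j-x_i)^2\approx -(N\nu^2/M)\int e^{-(x_i-s)^2}(s-x_i)^2\,ds = -(N\nu^2/M)\tfrac{\sqrt\pi}{2}$ is strictly negative, pushing $Var(\hat x_i)$ strictly below $1/\lambda_i\approx CRLB$. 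Together with the negligible bias this yields $MSE(\hat x_i)=Var(\hat x_i)+Bias^2\le CRLB(x_i)$ on the middle class.

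The main obstacle I anticipate is the bookkeeping required to keep the comparison honest: one must control the randomness of $n_i$ (hence of $\hat P_i$) jointly with the weighted sum $\sum_j g_{ij}(x_j-x_i)$, which is a ratio of correlated random quantities; one must justify replacing the truncated Gaussian integrals by full ones with only exponentially small error; and, most delicately, one must check that on $[3,M-3]$ the strictly negative variance correction of order $N\nu^2/M$ genuinely dominates the residual squared bias, so that $MSE\le CRLB$ survives when the boundary leakage is retained rather than idealized to zero. The role of Assumption~\ref{ass:uni}, a uniform prior with $M>3$, is exactly to make all these Gaussian integrals explicit and to guarantee the middle-class window is nonempty.
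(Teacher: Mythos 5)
Your bias analysis is essentially the paper's own argument: concentration of $n_i$ (the paper invokes $\mathbb{E}[n_i(1)]\to\infty$ plus the law of large numbers), explicit Gaussian integrals, and the observation that on $[3,M-3]$ the boundary leakage is exponentially small (the paper phrases this as the three-sigma rule together with local constancy of the density of $\bar x_j$). The one modeling difference there is that you idealize $\bar x_j=x_j$, whereas the paper retains the estimation noise $e_j=\bar x_j-x_j\sim\mathcal{N}(0,\bar P_j)$ and works with the convolution (uniform with Gaussian) density expressed through error functions; the symmetry mechanism that kills the bias is the same. Where you genuinely diverge is the CRLB comparison. The paper performs no frequentist variance computation at all: once asymptotic unbiasedness is argued, it identifies the MSE with the posterior variance $\hat P_i(t)$ and invokes Lemma \ref{lemma:crlb}, where the strict gap comes from the prior-precision term, $\hat P_i^{-1}=n_i+\bar P_i^{-1}>n_i=I_F$. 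Your route---$Var(\hat x_i)=\hat P_i^2\sum_j p_{ij}(1-p_{ij})(x_j-x_i)^2$ with the strict gap located in the Bernoulli correction---is more self-contained and more honestly frequentist, but it is exactly where your idealization stops being harmless.

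Two concrete problems. First, restoring the neighbors' score noise adds a variance term $\hat P_i^2\sum_j p_{ij}\bar P_j\approx \bar P_j/\lambda_i$, which is of the same order $O(1/\lambda_i)$ as the Bernoulli correction $\approx -\nu/(2\sqrt{2}\,\lambda_i)$ that you rely on (and also the same order as the CRLB itself); whenever $\bar P_j>\nu/(2\sqrt{2})$---e.g.\ in early rounds, when published scores are still poor---your negative correction is swamped and the frequentist variance exceeds $1/\lambda_i$. So your mechanism for the strict inequality is not robust for the actual model; the paper's mechanism (the prior term $\bar P_i^{-1}$ in the posterior precision, via Lemma \ref{lemma:crlb}) is what carries the theorem, at the price of identifying MSE with posterior variance. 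Second, the delicate point you flag at the end is not merely delicate---it defeats the claim if treated honestly: the limiting bias $\tfrac{1}{\sqrt{2\pi}}\bigl(e^{-x_i^2/2}-e^{-(M-x_i)^2/2}\bigr)$ is a constant in $N$, while $CRLB$ and your variance deficit are both $O(1/N)$, so for $N$ large enough $Bias^2$ dominates and $MSE\le CRLB$ fails (except exactly at $x_i=M/2$). The paper escapes this only by idealization: the three-sigma truncation is treated as exact, i.e.\ the bias is declared to be exactly zero on $[3,M-3]$, after which $MSE=\hat P_i(t)<CRLB$ follows immediately. To prove the theorem as stated you must adopt the same idealization (or restate the conclusion up to exponentially small terms); as written, your proposal leaves this unresolved. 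A minor omission: the theorem is claimed for all $t\ge 0$, so you also need the paper's induction showing that unbiasedness and the decreasing sequence $\hat P_i(t+1)=\hat P_i(t)/(1+\hat P_i(t)n_i(t))$ propagate across rounds.
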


\begin{proof}
	To begin with, we consider the first step
	\begin{equation*}
	\begin{aligned}
	& \bar x_i(1) = a(0) \hat x_i(0) + b(0)u_i(0), \\
	& \hat x_i(1) = \bar x_i(1) + \frac{{\bar P_{i}(1)}}{1+{\bar P_{i}(1)}{n_{i}(1)}} \sum\limits_{j \in {\mathcal{N}_{i,1}}} (\bar x_j(1) - \bar x_i (1)),
	\end{aligned}
	\end{equation*}
	for $x_i(1) \in [3 , M-3]$.
	
	Under \textit{Assumption} \ref{assump_initial}, we  see that $\bar x_j(1)$ is unbiased with $\bar x_j(1) \sim \mathcal{N} (x_j(1), {\bar P_{j}(1)})$ for any $j=1,\ldots,N$, which can also be rewritten as:
	\begin{equation*}
	\bar x_j(1) = x_j(1) + e_j(1), \quad \text{with}~ x_j(1) \sim \mathcal{U}[0, M] ~\text{and}~ e_j(1) \sim \mathcal{N} (0, {\bar P_{j}(1)}).
	\end{equation*}
	
	Thus, the pdf of $\bar x_j(1)$ can be given by the convolution of two density functions as
	\begin{equation*}
	\begin{aligned}
	p_{\bar X}(\bar x_j(1)) &= \int_{-\infty}^{+\infty} p_X(x_j(1))p_E(\bar x_j(1) - x_j(1))d  x_j(1) \\
	&=\dfrac{1}{2M} \left[erf(\dfrac{M-\bar x_j(1)}{\sqrt{2{\bar P_{j}(1)}}}) - erf(\dfrac{-\bar x_j(1)}{\sqrt{2{\bar P_{j}(1)}}})\right],
	\end{aligned}
	\end{equation*}
	where $erf$ denotes the error function.
	
	Recall that $\mathcal{N}_i(1)$ is generated based on (\ref{e_PrBer}), 
	we  see that $\mathbb{E}[n_i(1)] \to  + \infty$ as $N \to  + \infty$. Then by the law of large number, for any given $x_i(1) \in [3 , M-3]$, it holds that
	\begin{equation*}
	\begin{aligned}
	\mathop {\lim }\limits_{N \to  + \infty } \mathbb{E}[\hat x_i(1)] - x_i(1) &= \mathbb{E}[\bar x_j(1) - \bar x_i(1)] \\
	&= \int_{-\infty}^{+\infty} \nu e^{-\frac{(\bar x_j(1) - x_i(1))^2}{2}}(\bar x_j(1) - x_i(1))p_{\bar X}(\bar x_j(1)) d \bar x_j(1) \\
	&= \int_{x_i(1)-3}^{x_i(1)+3} \nu e^{-\frac{(\bar x_j(1) - x_i(1))^2}{2}}(\bar x_j(1) - x_i(1))p_{\bar X}(\bar x_j(1)) d \bar x_j(1) \\
	&= 0,
	\end{aligned}
	\end{equation*}
	where the last line results from the three-sigma rule and the fact that $p_{\bar X}(\bar x_j(1))$ is identical on $[3 , M-3]$ since $\bar P_j(1)$ is far smaller than $M$.
	
	In this situation, we have that $\hat P_i(t+1) = \frac{\hat P_i(t)}{1+\hat P_i(t) n_i(t)}$ is decreasing with $t$. Then the recursion can be conducted in the same manner as shown above, thus leading to an unbiased estimator for any $t \geq 0$, i.e.,
	\begin{equation*}
	\mathop {\lim }\limits_{N \to  + \infty } \mathbb{E}[\hat x_i(t)] = x_i(t), \quad \text{for any} \quad t \geq 0,
	\end{equation*}
	for any $x_i \in [3 , M-3]$.
	
	Hence,
        \begin{equation*}
	MSE(\hat x_i(t)) = \hat p_i(t) < CRLB(\hat x_i(t)), \quad \text{for any} \quad t \geq 0,
	\end{equation*}
	which is also converging to zero with probability almost 1.
	
\end{proof}

\begin{remark}
	As for the clients with credit scores near the boundary, it is obvious that the proposed estimator is biased. For example, the score estimate for an underprivileged client will be lifted since most of its neighbors have a higher score. Recall that the observation that merely relying on individual assets is unbiased with a high variance. Then the Bayes estimator realizes a lower variance at the expense of bias error, which can be considered as a trade-off between individual attributes and networked information. Such trade-off between bias and variance can be considered by the lender by adjusting the weighting function $\alpha (x_i)$ in (\ref{e_averRisk}). Alternatively, for those clients with extremely low or high credit scores, the lender can also choose to make assessments based only on individual financial attributes and consider the networked information as a reference for risk prediction as shown in Section \ref{Recursive}.
\end{remark}

\section{Numerical simulations}\label{section:Numerical Simulations}
In this section, some numerical examples are given to illustrate the performance of the proposed credit scoring algorithms using dynamic networked information. Firstly, estimation errors and variances for the Bayes estimator of 50 clients are presented. Secondly, Monte Carlo simulations are conducted to illustrate the effectiveness and consistency of the proposed algorithm. Our simulation results validate the theoretical analysis discussed above.

Here we consider the special case as given in Section \ref{section:special}, where the parameters in \eqref{e_step1} are chosen to be $a = 1$, $ b = 0$, $ Q_t = 0$, respectively, i.e., we highlight the effect of the networked information provided that everything else is equal. The scoring process is conducted on time interval $t = 1,\cdots, 15$. The scenario of online scoring is considered. Initially, the lender can only obtain a noisy credit estimate
based on the limited information of individual assets.
At each time, every client forms a new homophily-based network according to others' credit reports published by the lender, which is then used by the lender to make a one-step optimal predictor for the next period.
Under such framework of dynamic interaction, the links among clients are reconstructed at each period with the updated  publishing of the scores by the lender.
Fig. \ref{fig:trading} presents the network structure at time step $t = 15$. It is a directed network, where clients with similar credit scores are connected with a probability of $Pr(g_{ij}(t) = 1) = e^{-\frac{(x_i(t) - \bar x_j(t))^2}{2}}.$

\begin{figure*}
	\centering
		\includegraphics[width=8cm]{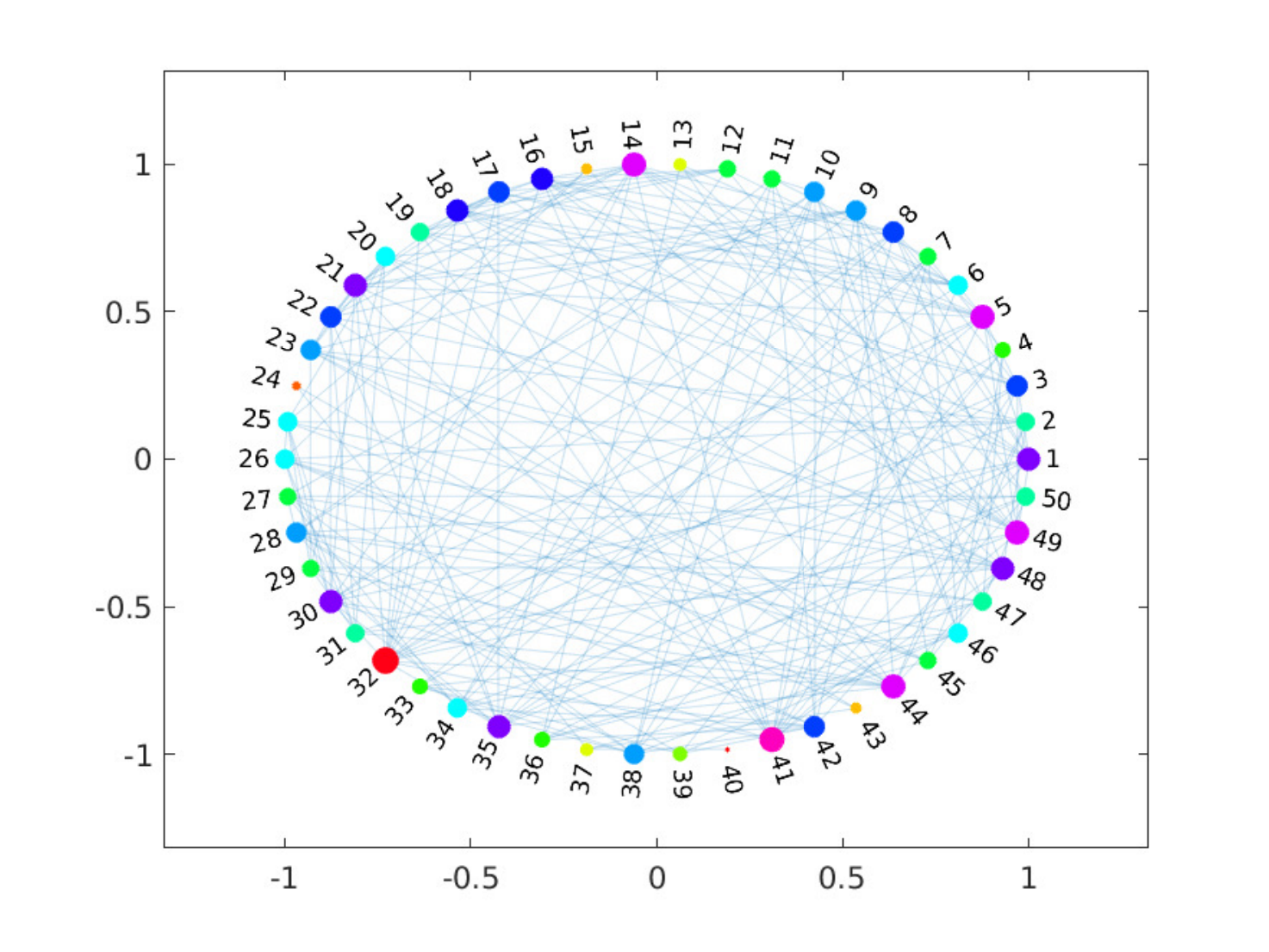}    
		\centering{\caption{Network among $50$ clients at time step $t = 15$}
			\label{fig:trading}}
\end{figure*}

The recursive Bayes estimator based on dynamic interaction in Section \ref{dynamic} is used for lenders to update the credit scores for each client. Estimation results and the variance of the errors are shown in Figure \ref{fig:Estimation}. The real state (red dots in Fig. \ref{fig:trading}) stands for the true credit states for each client, which are presented in an ascendant order. The estimation values at time step $t=1$, $t=5$, $t=15$ are presented for each client, respectively. We  see that for the middle-class clients, i.e., the clients with credit scores around $4 \sim 12$ in our simulation, the estimation results converge to their true values in a few steps, while the underprivileged clients and privileged clients (clients with credit scores smaller than $4$ and  bigger than $12$, respectively) have a positive and negative bias, respectively. Such conclusion is consistent with the theoretical results in Section \ref{section:special}. The variance of the estimation errors on the right of Fig. \ref{fig:Estimation} shows that $\hat P_i (t)$ is decreasing to the lower bound $P_l$. Furthermore, since there is no system noise $w(t)$, $\hat P_i (t)$ will converge to $0$ with probability of almost 1.

\begin{figure}
	\centering
	\subfigure{
		\begin{minipage}[t]{0.5\linewidth}
			\centering
			\includegraphics[width=8cm]{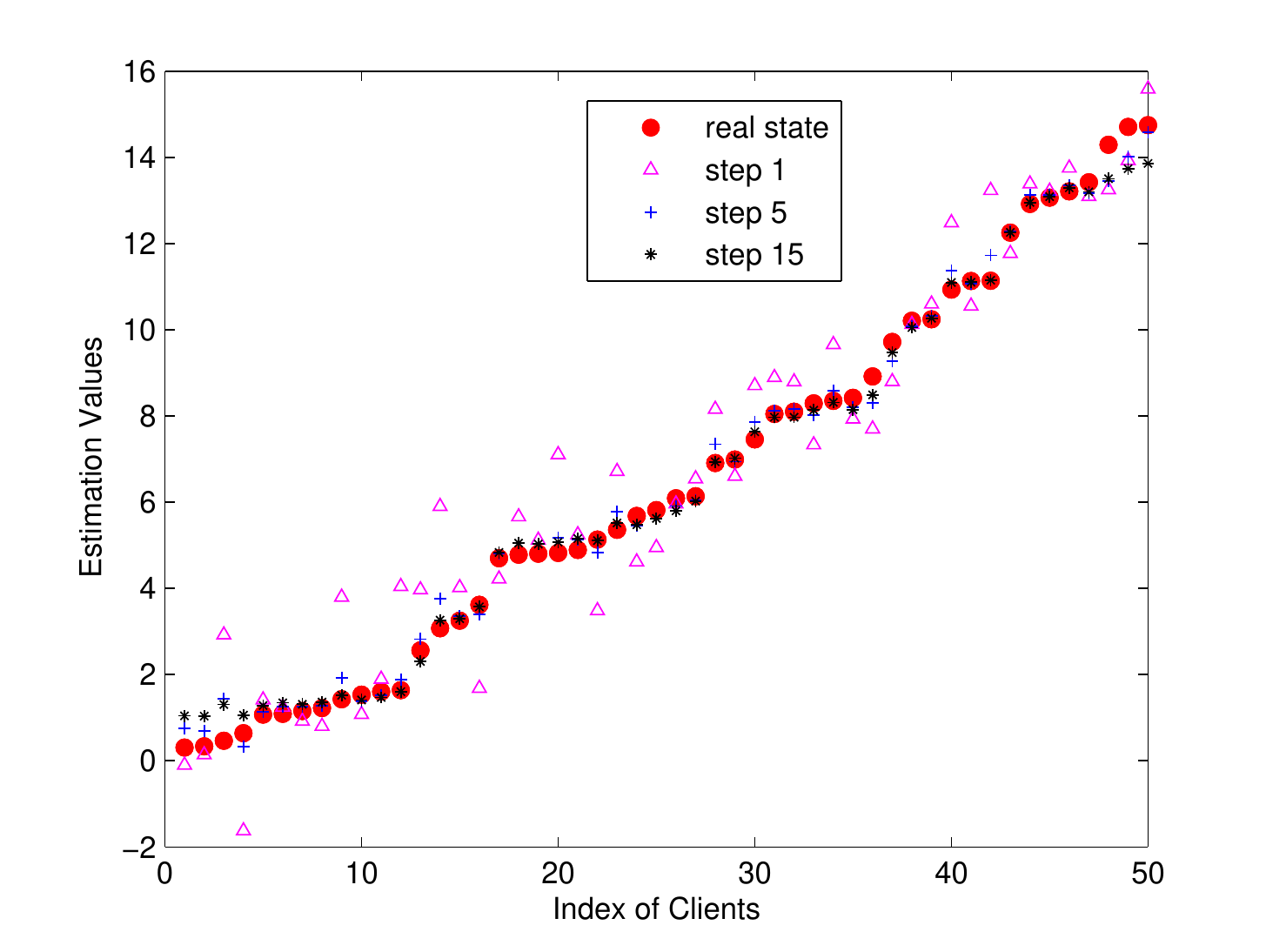}
		\end{minipage}%
	}%
	\subfigure{
		\begin{minipage}[t]{0.5\linewidth}
			\centering
			\includegraphics[width=8cm]{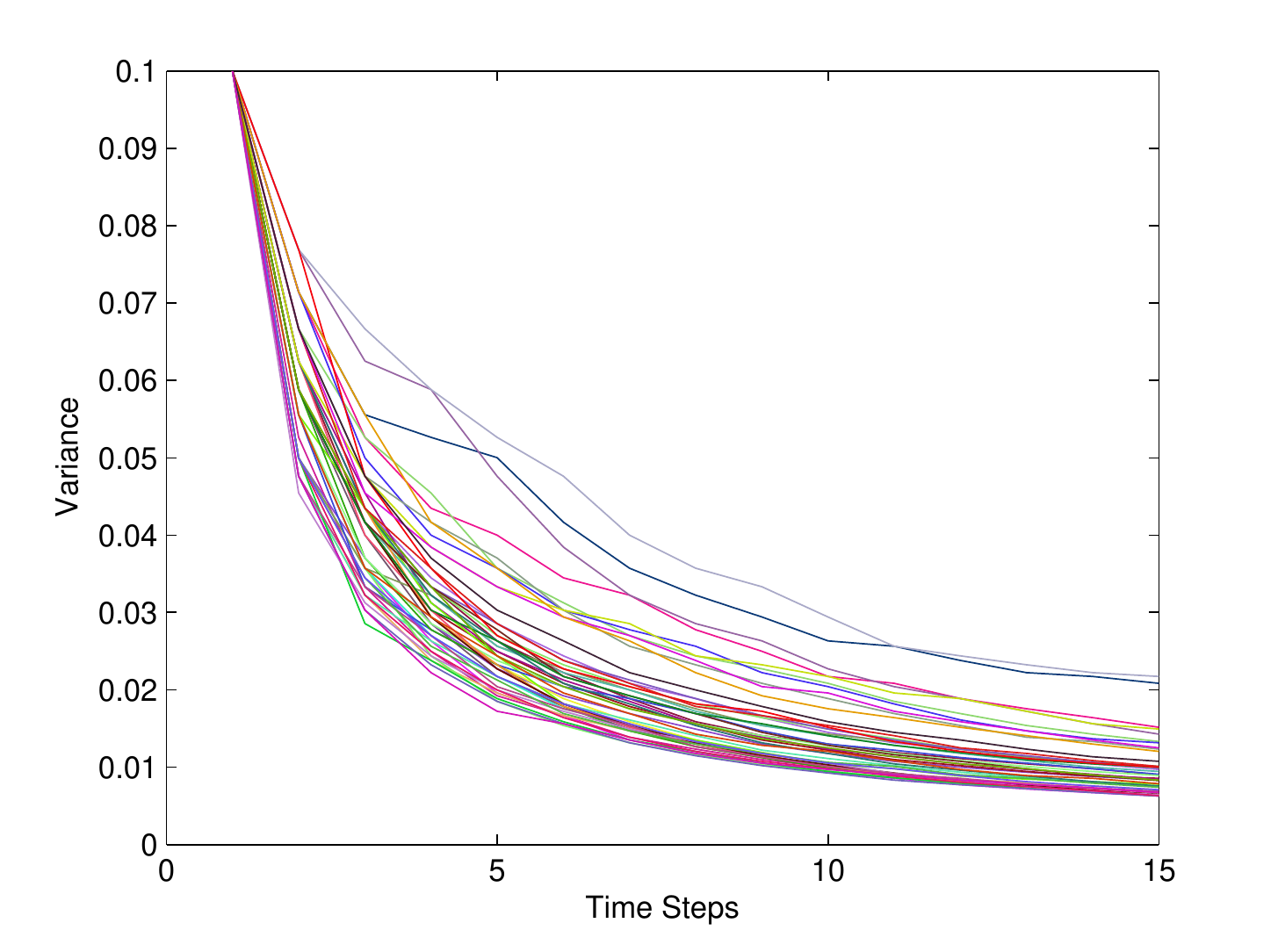}
		\end{minipage}%
	}%
	\caption{Credit scoring estimation (left) and the variance of error  (right)}
	\label{fig:Estimation}
\end{figure}

Besides, MSE and CRLB of the recursive Bayes estimator are compared in Figure \ref{fig:MSE} to show the effectiveness of our algorithm. The results of step $t=10$ and step $t=15$ are presented, respectively. We see that MSE is lower than CRLB for middle-class clients for these two cases. Meanwhile, since the estimation accuracy is improved recursively and the proposed estimator is consistently unbiased for scores in the middle range, their mean square errors are converging with increasing iterations.

\begin{figure}
	\centering
	\subfigure{
		\begin{minipage}[t]{0.5\linewidth}
			\centering
			\includegraphics[width=8cm]{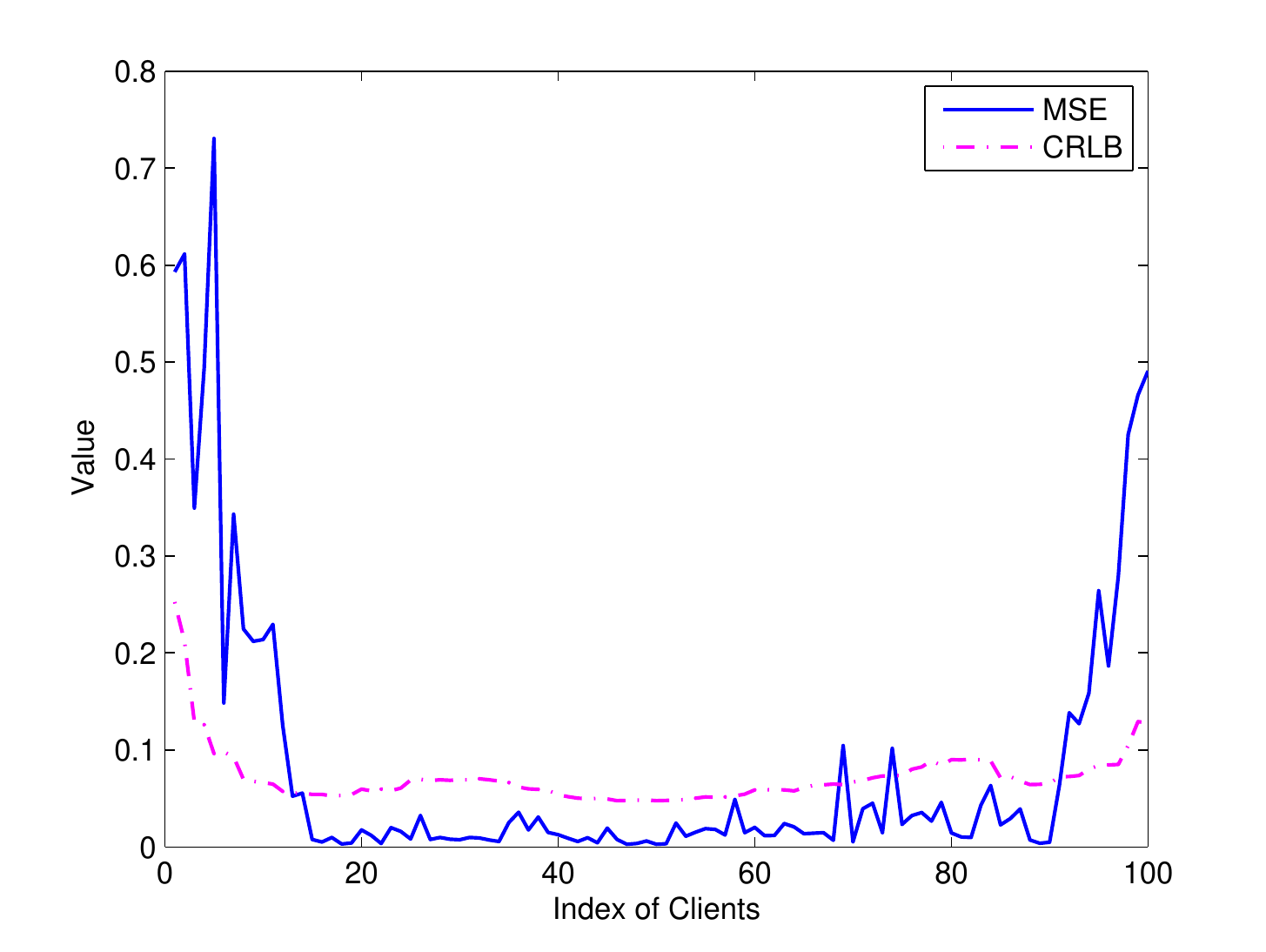}
		\end{minipage}%
	}%
	\subfigure{
		\begin{minipage}[t]{0.5\linewidth}
			\centering
			\includegraphics[width=8cm]{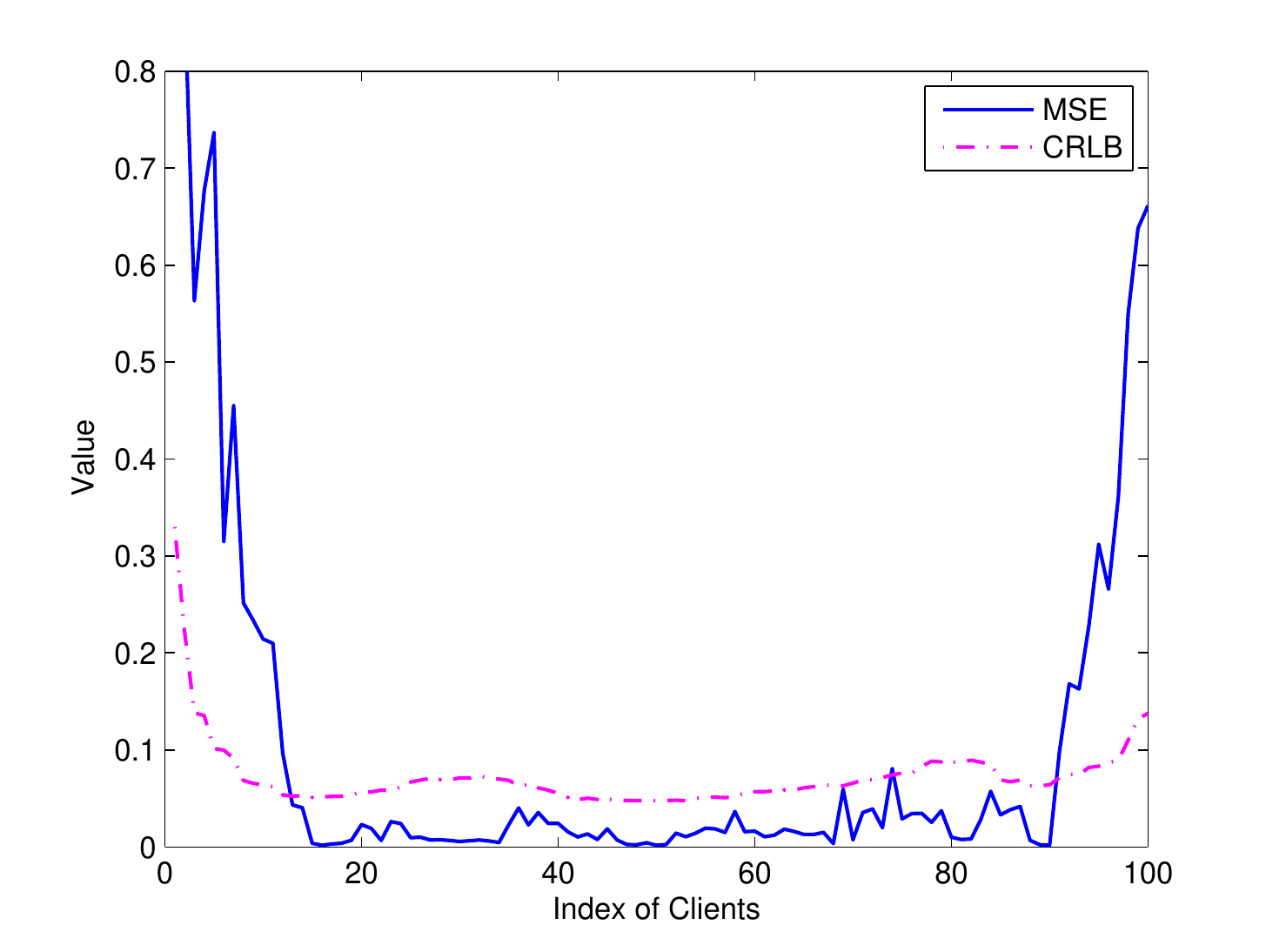}
		\end{minipage}%
	}%
	\caption{MSE and CRLB for $N=100$, eatimation step $t = 10$ (left) and and $t = 15$  (right)}
	\label{fig:MSE}
\end{figure}

Moreover, in order to illustrate the consistency of the proposed algorithm, the error box (\cite{mcgill1978variations}) for Monte Carlo simulations are plotted to show statistical properties (the number of Monte Carlo is 100). In Figure \ref{fig:Error}, the relative error ($\hat{x}(t) - x(t)$) of at the time step $t = 15$ estimation for each client are presented in forms of error box. Two networks with the number of clients $N=50$ and $N=100$ respectively are compared. Particularly, on each box for client, the central mark indicates the median of the relative error, and the bottom and top edges of the box indicate the $25th$ and $75th$ percentiles of the relative error, respectively. 
The outliers are plotted individually using the '+' symbol. From the error box for $N=50$ in the left plot we  see that the estimations for client 10 to client 40 are more accurate than others since the relative errors fluctuate around $0$. Furthermore, the magnitudes of their estimation errors are decreasing when they have more connections in a bigger network.  
This result is in accordance with the consistency of the estimator proved in Theorem \ref{uniform}. Moreover, we also observe that clients with lower credit scores are overestimated whereas those with higher credit scores are underestimated to a certain extent, which illustrates that incorporating networked information could impose different effects on different types of clients.
\begin{figure}
	\centering
	\subfigure{
		\begin{minipage}[t]{0.5\linewidth}
			\centering
			\includegraphics[width=8cm]{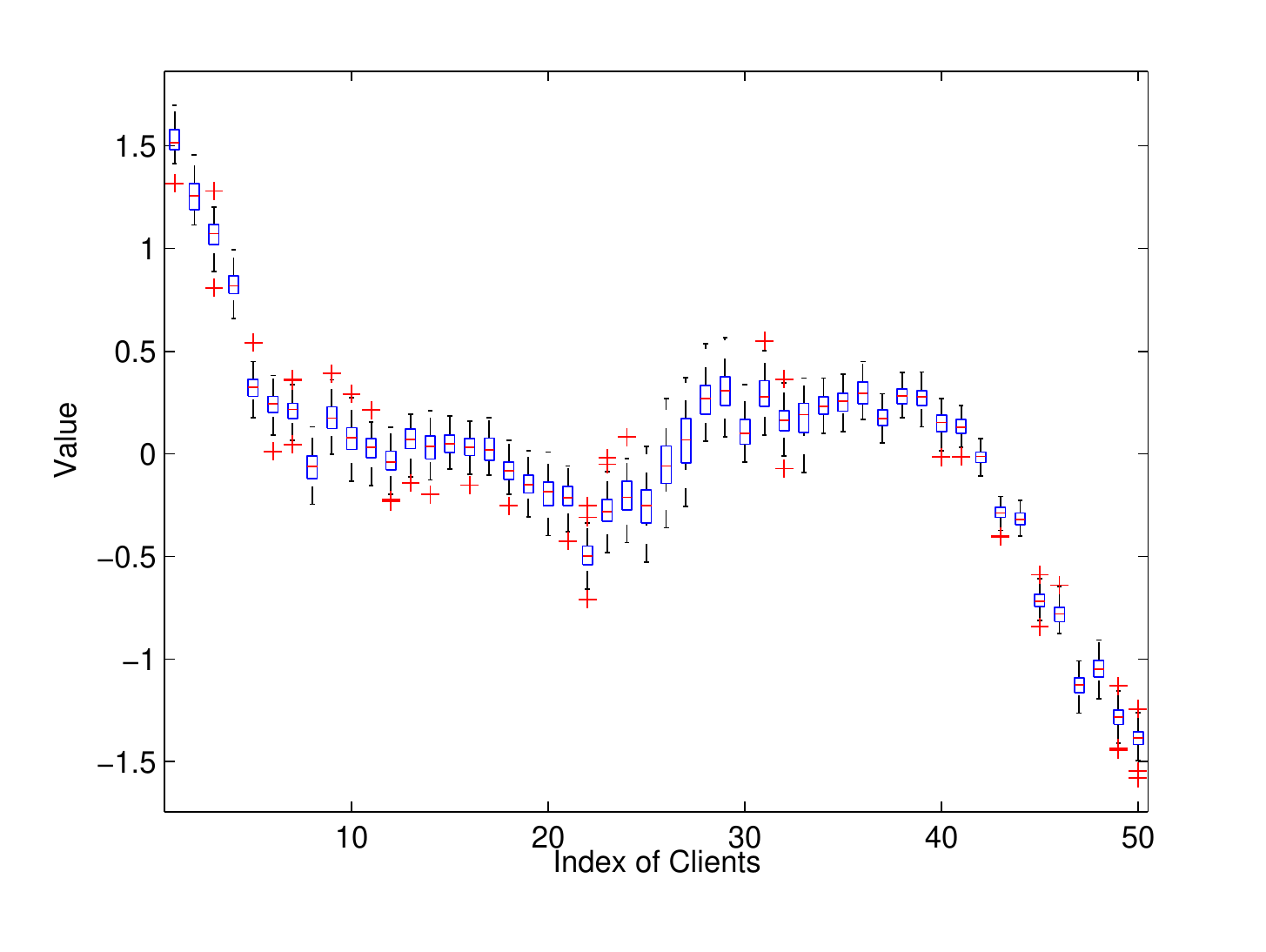}
		\end{minipage}%
	}%
	\subfigure{
		\begin{minipage}[t]{0.5\linewidth}
			\centering
			\includegraphics[width=8cm]{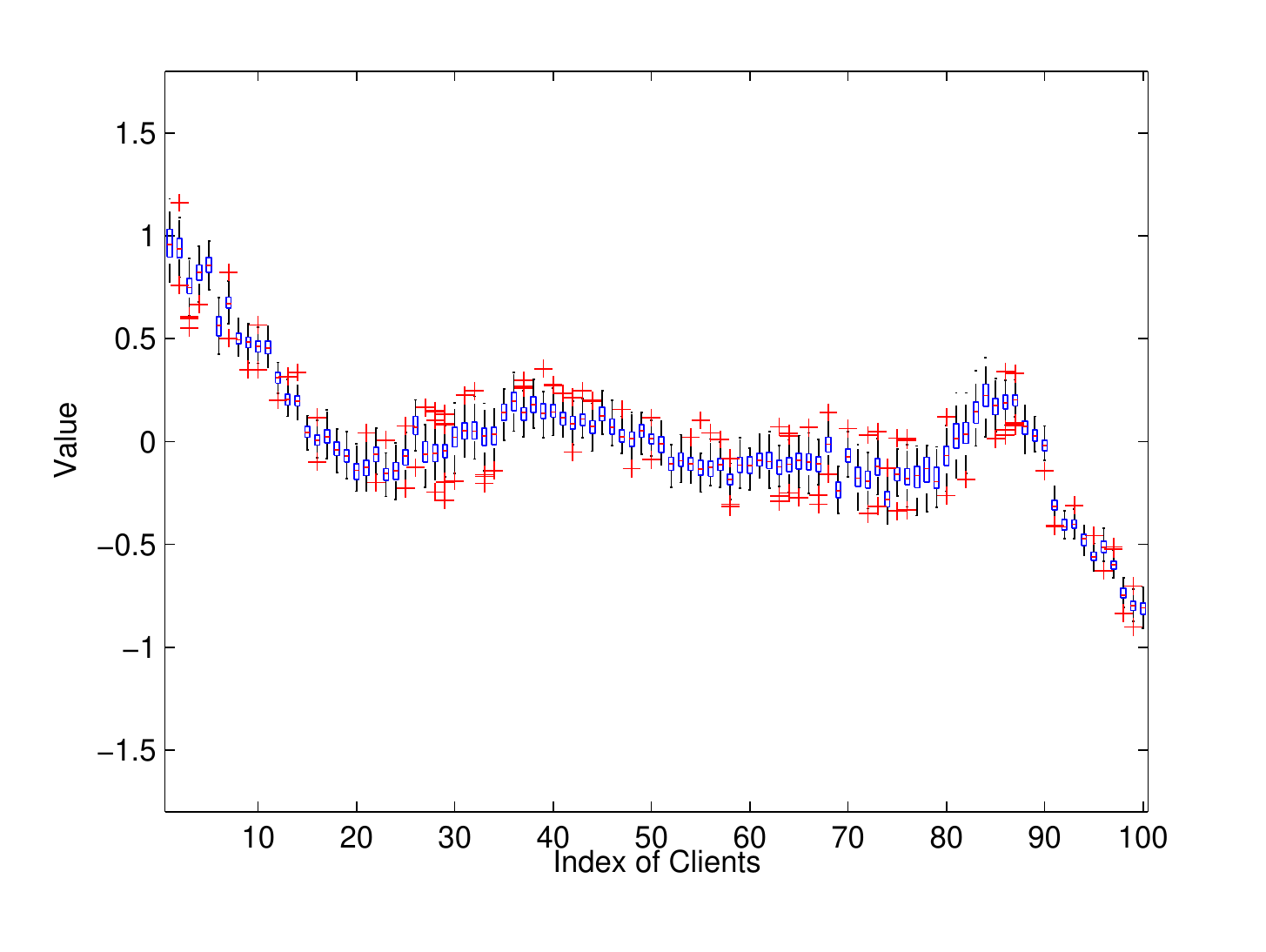}
		\end{minipage}%
	}%
	\caption{Error plot for numbers of clients $N=50$ (left) and $N=100$  (right)}
	\label{fig:Error}
\end{figure}

\section{Discussion and conclusion}\label{section:Conclusions and Future Work}

\subsection{Discussion}
The mathematical analysis in this paper shows that networked data can provide additional relevant information for risk management and clients' true creditworthiness. 
The benefits of incorporating networked data are more significant for societies that have a higher density of interactions. As shown in (\ref{P_risk}) and (\ref{e_Bayes}), the precision of estimation is improved to a larger extent for clients with higher degree centrality (number of connections). As many companies have started collecting networked data, the results obtained in this paper could potentially help them in terms of data collection and decision making, namely, to decide which kind of networked data is actually relevant and informative for a specific society, and how to incorporate it in the existing framework. 

The results of this paper also suggest that different measures could be considered for different types of clients. For example, for new clients with limited financial history, creditworthiness merely based on individual attributes can be rather inaccurate, thus their social or trading networks ought to be incorporated by the lenders to reduce risk in the near future.
Our results further indicate that incorporation of networked data in credit scoring has different effects on different types of clients. With traditional credit scoring methods that only rely on individual attributes, it is usually difficult for underprivileged clients to get loans. Such problems also concern the existing discussion on inequality in the financial sector. It can be observed from (\ref{e_Bayes}) that their connections with other clients would increase their opportunities for better financial support. However, under such practice of credit scoring, there is a risk that clients with high scores might be reluctant to establish connections with underprivileged clients in order to keep their scores, which would result in kind of financial segregation problem due to adverse selection. Therefore, more attention is required for the lenders as well as the regulators to balance those effects, and different scoring methods could be considered for different types of clients.

\subsection{Conclusion}
Currently, it is common practice to use the structured financial data such as loan characteristics (purpose of the loan and its duration), clients characteristics (credit utilization) and credit history (repayment of previous loans) for estimating the credit score.  Motivated by increasing empirical practices with networked data in the financial sector, this paper investigates from a theoretic point of view the relevance of networked data to individual creditworthiness. A model-based framework is proposed to address this problem systematically by incorporating networked information in a dynamic environment. The interaction between clients is modelled based on the assumption of homogeneous preference. Then it is shown by Bayesian approaches how networked data could facilitate financial decision making in two scenarios respectively.
Firstly, we propose a Bayesian optimal filter to predict clients' credit scores if the publishing of the credit scores are estimated merely from structured financial data. Such prediction is used as a monitoring indicator for the risk warning  in lenders'
future financial decisions. Secondly, we develop a recursive Bayes estimator to improve the precision of score estimation by incorporating network topologies as well. It is shown that under the proposed evolution framework, the designed estimator has a higher precision than any efficient estimator, and the mean square errors are strictly smaller than the Cram\'er--Rao lower bound for clients within a certain range of scores. For further investigation, simulation results for a case where true credits are uniformly distributed illustrate the effectiveness and feasibility of the proposed methods.



\bibliographystyle{cas-model2-names}

\bibliography{ref}

\end{document}